\theoremstyle{theorem}
\newtheorem{mytheory}{Theorem}
\theoremstyle{proof}
\titlespacing*{\section}{0pt}{4pt}{4pt}
\titlespacing*{\subsection}{0pt}{4pt}{3pt}
\titlespacing*{\subsubsection}{0pt}{4pt}{3pt}
\title{TrendSim: Simulating Trending Topics in Social Media Under Poisoning Attacks with LLM-based Multi-agent System}
\author{
  \textbf{Zeyu Zhang\textsuperscript{1}},
  \textbf{Jianxun Lian\textsuperscript{2}},
  \textbf{Chen Ma\textsuperscript{1}},
  \textbf{Yaning Qu\textsuperscript{1}},
  \textbf{Ye Luo\textsuperscript{1}},
  \textbf{Lei Wang\textsuperscript{1}},
  \\
  \textbf{Rui Li\textsuperscript{1}},
  \textbf{Xu Chen\textsuperscript{1}},
  \textbf{Yankai Lin\textsuperscript{1}},
  \textbf{Le Wu\textsuperscript{3}},
  \textbf{Xing Xie\textsuperscript{2}},
  \textbf{Ji-Rong Wen\textsuperscript{1}},
\\
\\
  \textsuperscript{1}Renmin University of China,
  \textsuperscript{2}Microsoft Research Asia,
\\
  \textsuperscript{3}Hefei University of Technology
\\
  \texttt{\{zeyuzhang,xu.chen\}@ruc.edu.cn}
}
\begin{document}
\setlength{\abovedisplayskip}{3pt}
\setlength{\belowdisplayskip}{3pt}
	
\maketitle
\begin{abstract}
	
Trending topics have become a significant part of modern social media, attracting users to participate in discussions of breaking events.
However, they also bring in a new channel for poisoning attacks, resulting in negative impacts on society.
Therefore, it is urgent to study this critical problem and develop effective strategies for defense.
In this paper, we propose TrendSim, an LLM-based multi-agent system to simulate trending topics in social media under poisoning attacks.
Specifically, we create a simulation environment for trending topics that incorporates a time-aware interaction mechanism, centralized message dissemination, and an interactive system.
Moreover, we develop LLM-based human-like agents to simulate users in social media, and propose prototype-based attackers to replicate poisoning attacks.
Besides, we evaluate TrendSim from multiple aspects to validate its effectiveness.
Based on TrendSim, we conduct simulation experiments to study four critical problems about poisoning attacks on trending topics for social benefit.

\end{abstract}

\section{Introduction}
	\label{sec:introduction}
	Trending topics have become a significant part of modern social media platforms in recent years, which refer to the topics that draw public attention and widespread discussions within a short period, such as that in \textit{Weibo Hot Searches}\footnote{\url{https://s.weibo.com/top/summary}} and \textit{Twitter Trends}\footnote{\url{https://twitter.com/explore/tabs/trending}}.
	Each trending topic typically includes a headline, a description, and numerous comments.
	Compared with conventional social media contents, trending topics always emerge explosively, and are displayed in a highlight section to ensure users' engagement in current discussions.
	However, these factors also amplify the influence of poisoning attacks on users in social media platforms.
	Such attacks often manipulate or distort information in the comments on trending topics to mislead users and spread misinformation.
	They can result in numerous negative impacts, such as misguiding facts, provoking conflicts, and even destroying social trusts, which are harmful to our society. However, this critical problem remains inadequately studied yet.
	
	Recently, large language models (LLMs) have exhibited human-like capabilities~\cite{zhao2023survey,wang2023survey}, and several studies have proposed to design LLM-based human-like agents to conduct social simulations~\cite{gao2023large,gao2023s,kovavc2023socialai}. By analyzing their results, researchers can draw deep insights into human behaviors, and devise effective policies for social benefit~\cite{hua2023war}.
	However, previous simulation frameworks have limitations that make them unsuitable for simulating trending topics.
	First of all, most frameworks employ round-based interactions without taking time into consideration~\cite{wang2023large}, despite the fact that trending topics are highly time-sensitive. Second, most social simulations are designed for peer-to-peer interactions where agents primarily communicate with their neighbors~\cite{gao2023s}. However, due to the individual section of trending topics, their message dissemination should be centralized like a hub. Moreover, previous methods seldom focus on dynamic psychological conditions during simulations, and overlook the problem of poisoning attacks in trending topics.

	To address these limitations, in this paper, we propose an LLM-based multi-agent system, named \textbf{TrendSim}, to simulate trending topics in social media under poisoning attacks.
	Specifically, our framework designs a time-aware interaction mechanism and centralized message dissemination to adapt to the trending topic scenario, and implement details of a multi-agent interactive system. Besides, we design LLM-based human-like agents with a perception, a memory, and an action module, in order to simulate user behaviors and reflect their psychological conditions. Moreover, we create prototype-based attackers with different targets to generate poisoning attacks.
	We conduct extensive evaluations to verify the effectiveness of our simulation framework from multiple aspects.
	Based on TrendSim, we study four critical problems of the poisoning attacks on trending topics in social media, analyzing the results and providing suggestions for defense.
	Our work is the first one that focuses on the poisoning attack problem in trending topics with LLM-based social simulations.
	
	However, as an initial study in this new area, we should emphasize some facts in our work.
	First, because the implementation of trending topics varies across social media platforms, our work abstracts a common and reasonable implementation, based on certain assumptions.
	Second, the existence of numerous invisible factors in the real world makes it impractical to simulate all details with complete accuracy. Therefore, our research aims to provide an interpretable simulation process and deliver evolutionary conclusions under reasonable assumptions, rather than replicating every detail of reality.
	
	Our contributions are summarized as follows:

	\noindent $\bullet$ We propose an LLM-based multi-agent system, named TrendSim, to simulate trending topics in social media under poisoning attacks. We design the time-aware interaction mechanism, centralized message dissemination, and interactive multi-agent system to model trending topics in social media.
	
	\noindent $\bullet$ We develop LLM-based human-like agents with a perception, a memory, and an action module to simulate users in social media platforms. We create prototype-based attackers to generate various poisoning attacks in our simulation.
	
	\noindent $\bullet$ We conduct extensive evaluations of our simulation framework. Based on TrendSim, we study four critical problems of poisoning attacks on trending topics in social media.

	The rest of our paper is organized as follows. First, we present related works in Section~\ref{sec:related_work}. Then, we demonstrate the details of TrendSim in Section~\ref{sec:framework}, and conduct evaluations in Section~\ref{sec:alignment}.
	After that, we conduct simulation experiments based on TrendSim in Section~\ref{sec:experiment}.
	Finally, we draw conclusions in Section~\ref{sec:conclusion}, and further discuss limitations and ethical impacts at the end of this paper.
	
	\begin{figure*}[t]
		\centering
		\setlength{\fboxrule}{0.pt}
		\setlength{\fboxsep}{0.pt}
		\fbox{
			\includegraphics[width=1.0\linewidth]{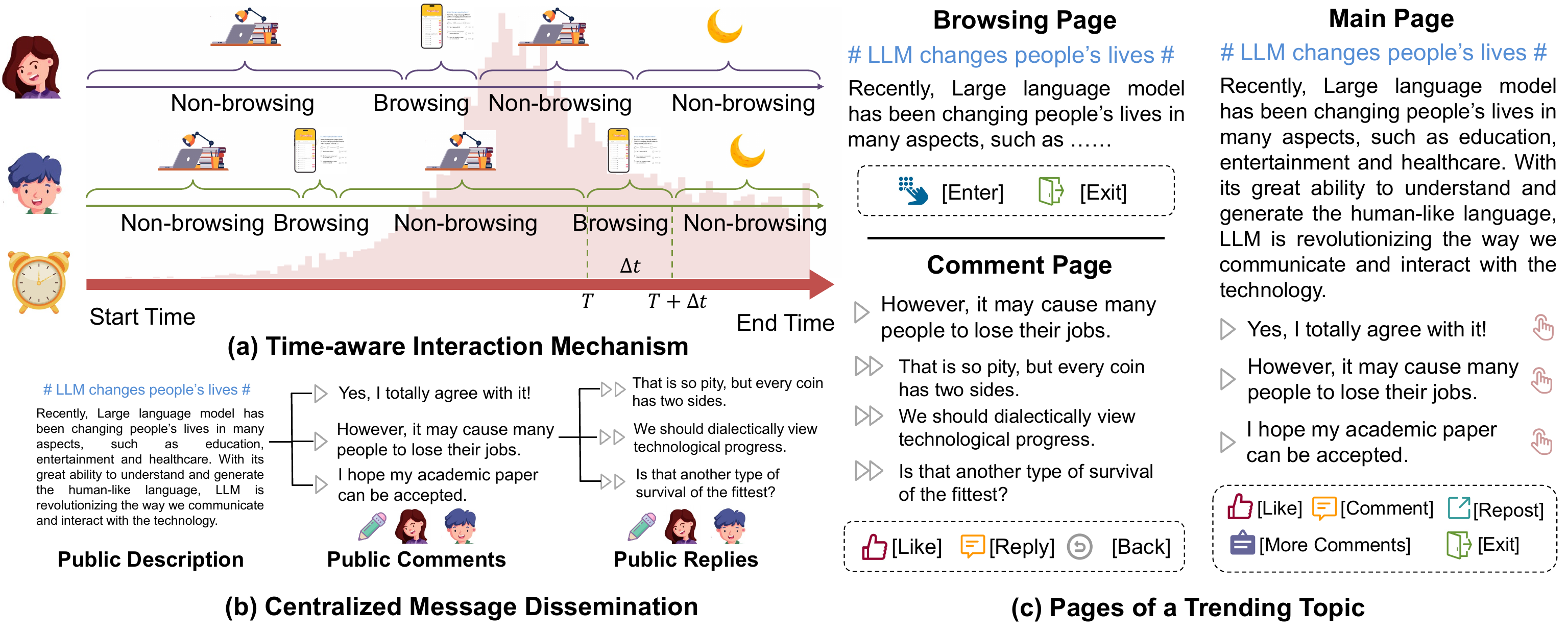}
		}
		\caption{The framework of TrendSim for simulating trending topics in social media.}
		\label{fig:framework}
		\vspace{-0.5cm}
	\end{figure*}
	
	\section{Related Works}
	\label{sec:related_work}
	
	\subsection{Poisoning Attack in Social Media}
	In recent years, poisoning attacks on social media platforms have gradually attracted widespread attention~\cite{khurana2019preventing}. It has been shown that social media serves as a primary way for spreading scams and malware, with numerous poisoning attacks occurring on social media platforms~\cite{kunwar2016social}.
	Several previous studies have investigated the intentions and characteristics of poisoning attackers in social media~\cite{aimeur2019manipulation, briscoe2014cues}.
	Their behaviors typically involve posting offensive comments and obscene images intended to propagate hate and perpetrate cyberbullying~\cite{chinivar2022online}.
	They also find that poisoning attacks are more inclined to focus on human vulnerabilities~\cite{aimeur2019manipulation}.
	
	Most previous studies focus on conventional contents in social networks, and pay less attention to trending topics in modern social media platforms. However, the poisoning attack on trending topics has become a critical problem, posing a substantial threat to our social environment, which demands wider attention from researchers.

	\subsection{LLM-based Multi-agent Social Simulation}
	Large language models have shown promising capabilities for building autonomous agents, attributed to their excellent language understanding and generation capabilities~\cite{ouyang2022training, park2023generative, wang2023large}.
	These agents are typically equipped with extensive modules based on LLMs to perform complex tasks~\cite{xi2023rise,shinn2023reflexion, zhu2023ghost, wang2023describe, qin2023toolllm}.
	Recently studies have proposed to utilize LLM-based agents for social simulations, in order to model human behaviors and interactions in different scenarios~\cite{wang2023large,park2023generative,gao2023s,gao2023large}. 
	For instance, $S^3$ ~\cite{gao2023s} simulates the emergence of social networks and phenomena like information diffusion through LLM-based agent interactions.
	RecAgent~\cite{wang2023large} proposes the simulation of user behaviors in the domain of recommender systems.
	Generative Agents~\cite{park2023generative} and AgentSims\cite{lin2023agentsims} create multi-agent systems as a digital town to replicate humans' daily lives. 
	
	However, previous frameworks fail in the scenario of trending topics in social media, mainly due to their lack of time consideration, centralized message dissemination, and reflection of user psychological conditions.
	Therefore, our work is the first one that simulates trending topics in social media and studies their poisoning attack problems.
	
	\section{Methods}
	\label{sec:framework}
	
	\subsection{Overview of TrendSim}
	TrendSim aims to simulate the complete lifecycle of a trending topic, interacted with users in the social media platform.
	Here, we formulate a trending topic as a public post that describes an explosive event, which users can interact by commenting, replying, and other actions.
	The lifecycle of a trending topic typically spans from its emergence to its disappearance, commonly lasting less than several hours.
	During this period, users can express their altitudes by commenting, or exchange opinions by replying to specific comments. These interactions push the evolution of a trending topic, raising a wide discussion, but they also provide opportunities for attackers to spread poisoning attacks.
	
	\subsection{Multi-agent Simulation Environment}
	\subsubsection{Time-aware Interaction Mechanism}
	Different from round-based simulation, our framework integrates a time-aware interaction mechanism, as illustrated in Figure~\ref{fig:framework}(a).
	Specifically, each session of interactions occurs at a specific timestamp $T$, and has a duration $\Delta t$. All the interactions execute in temporal order implemented with a dynamic priority queue~\cite{van1976design}.
	For example, if Alice views the trending topic at 3:12~PM and spends two minutes commenting, Bob can view Alice's comment at 3:14~PM.
	We assume that users access the trending topic following a certain probability distribution $P(t)$ with respect to time $t$.
	Specifically, $P(t)$ follows an exponential increase at the beginning, then experiences a growth deceleration before the peak, and finally takes a gradual decline to fade out~\cite{lerman2010using}. It also ensures $G_0$-smooth and $G_1$-smooth for continuity properties~\cite{barsky1989geometric}.
	Accordingly, we assume $P(t)$ as\\
	\resizebox{\linewidth}{!}
	{
		$P(t) \propto
		\left\{
		\begin{aligned}
			& e^{A(t-T_m)} & 0 \le t < T_m, \\
			& - \alpha A (t-T_m-\frac{1}{2\alpha})^2 + 1 + \frac{A}{4 \alpha} & T_m \le t < T_m + \frac{1}{\alpha}, \\
			& (t - T_m - \frac{1}{\alpha} + 1)^{-A} & t \ge T_m + \frac{1}{\alpha},
		\end{aligned}
		\right.$
	}
	where $A,T_m,\alpha,$ are hyper-parameters. Because of the page limitation, more details can be found in Appendix~\ref{appendix:entrance_prob} for better illustration.
	
	In order to improve the efficiency of simulation, we initially sample the first time of users' access based on the distribution before the simulation starts, and dynamically sample the next access time at the end of each access. 
	
	\subsubsection{Centralized Message Dissemination}
	Conventional social media messages typically spread through social networks based on user relationships. However, modern social media platforms commonly deploy an individual section to highlight trending topics, making their message dissemination through a centralized hub rather than a peer-to-peer network.
	For example, \textit{Weibo Hot Searches} provides a real-time list of top-50 popular hashtags as trending topics, and all the users can directly access them from the index page.
	Therefore, we implement centralized message dissemination for simulating trending topics.
	
	Specifically, TrendSim has three primary ways to disseminate messages, shown in Figure~\ref{fig:framework}(b).
	First of all, each trending topic features a public description consisting of a title, a summary, and the full content, which is visible to all the users in social media.
	Second, TrendSim allows users to post their comments under the trending topic, which are then accessible to others.
	Finally, TrendSim allows users to reply to any comments under the trending topic, and these replies are also visible to other users.
	By implementing these three mechanisms, users are able to get, post, and exchange their messages on trending topics in TrendSim.
	
	\subsubsection{User-Environment Interactive System}
	We design an interactive system between users and a trending topic. It defines the user's observation space, action space, and transition mechanism, illustrating how the user's action influences the trending topic.
	Specifically, users can be presented with three different pages of the trending topic, which are shown in Figure~\ref{fig:framework}(c):
	
	\noindent $\bullet$ \textit{Browsing Page}: users observe the title and summary of the trending topic, then take actions to view the details (navigate to Main Page) or leave the social media (finish this session).
	
	\noindent $\bullet$ \textit{Main Page}: users observe the title, full content, and top-$k$ comments. Then, users can choose one action among liking, commenting, reposting, viewing more comments (fetch next $k$ comments), viewing details of a specific comment (navigate to Comment Page) or leaving (finish this session). 
	
	\noindent $\bullet$ \textit{Comment Page}: users observe the specific comment with its top-$k$ replies, then take actions to like, reply or go back (navigate to Main Page).
	
	Moreover, user actions can influence the environment as well. Specifically, commenting on the trending topic or replying to comments can leave messages that influence subsequent observations of other users. Besides, liking a trending topic can add the popularity, and the numbers of likes on comments determine their rankings.
	In order to avoid taking infinite actions, we assume a maximum number of interactions in each session.

	\subsection{LLM-based User Agent}
	We design LLM-based agents to simulate users in social media. According to the human cognitive process~\cite{solso1979cognitive}, we design a perception, a memory, and an action module (see Figure~\ref{fig:user_agent}) to imitate human behaviors and reflect psychological conditions.
	\begin{figure*}[t]
		\centering
		\setlength{\fboxrule}{0.pt}
		\setlength{\fboxsep}{0.pt}
		\fbox{
			\includegraphics[width=0.98\linewidth]{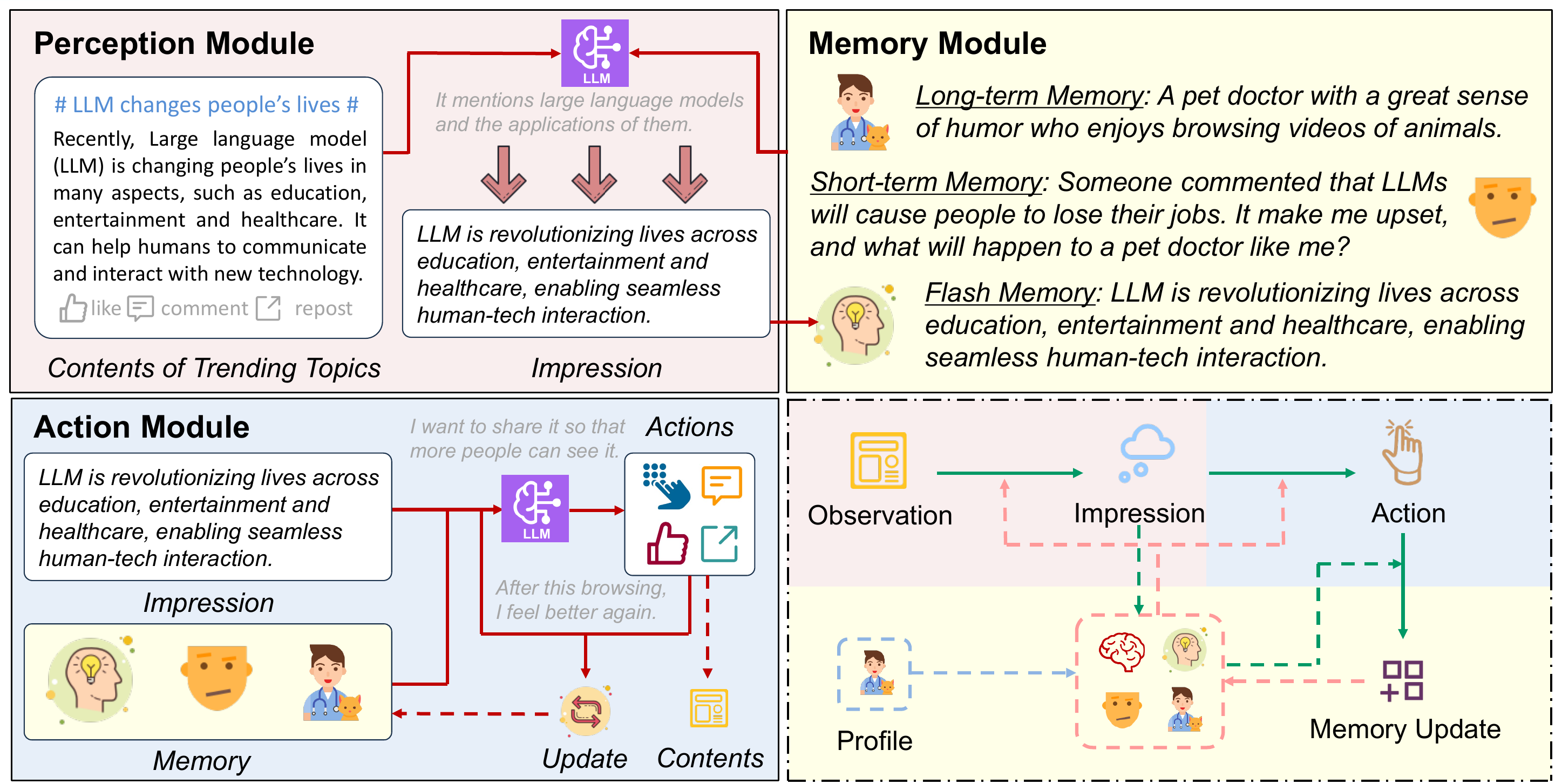}
		}
		\caption{The framework of LLM-based user agents in TrendSim.}
		\label{fig:user_agent}
		\vspace{-0.5cm}
	\end{figure*}

	\subsubsection{Perception Module}
	When browsing the contents of a trending topic, users always form impressions before their thinking and acting~\cite{solso1979cognitive}.
	Therefore, we design a perception module for agents to imitate this process.
	Specifically, we define the perception process as
	$$
	I \leftarrow Perception(f,O,M),
	$$
	where $I$ is the impression, $O$ represents the original observation, $M$ means the memory, and $f$ is implemented with an LLM.
	During this process, agents can express different attention in the generated impressions according to their memory, similar to diverse users in social media. For example, an optimistic user tends to focus on positive aspects, while a sad person who has just broken up is more likely to generate a negative impression.
	
	\subsubsection{Memory Module}
	\label{sec:memory_module}
	Memory is a significant component for human-like agents in social simulation, responsible for distinguishing one user from another~\cite{zhang2024survey}. It also dynamically affects user behaviors during the simulation.
	Therefore, according to cognitive psychology, we design a memory module consisting of three levels: long-term memory, short-term memory, and flash memory.
	
	Long-term memory incorporates the summary of the user's lifelong experiences (i.e., profiles), and remains unchanged during the simulation.
	Specifically, we implement long-term memory of agents by distilling their posts from real-world social media platforms before the simulation starts, followed by
	$$
	m_l \leftarrow LTM(f,\{p_1,p_2,...,p_N\}),
	$$
	where $m_l$ means the long-term memory, and $\{p_1,p_2,...,p_N\}$ are the posts of a real-world user.
	
	Short-term memory aims to maintain the dynamic condition of the user's psychology during the simulation.
	We utilize three major aspects to model short-term memory in our scenario as follows.
	
	\noindent $\bullet$ \textit{Emotion}: Direct emotional feeling of the user when browsing current contents.
	
	\noindent $\bullet$ \textit{Opinion}: Personal altitude of the user towards the trending topic after participating in it.
	
	\noindent $\bullet$ \textit{Social Confidence}: Personal belief of the user that trusts the justice of society.
	
	We design a reflection process to dynamically update short-term memory after each interaction. Specifically, we have
	$$
	m_s \leftarrow Reflection(f,I,A,m_s),
	$$
	where $m_s$ means short-term memory and $A$ indicates the action of user.
	Flash memory is the most immediate part of memory, storing the impression of observations. Specifically, we let flash memory $m_f=I$ for the current interaction. 
	In summary, the memory of an LLM-based user agent is defined as $M = (m_l,m_s,m_f)$ that affects user behaviors.

	\subsubsection{Action Module}
	Based on the memory, our user agents can generate actions towards their observations. Specifically, we implement the action module with LLMs by
	$$
	A \leftarrow Action(f,O,M).
	$$
	With the action module, user agents can take actions from the action space to affect the trending topics.
	As a result, users' messages can be shared among other users in social media.
	
	\subsection{Prototype-based Attacker Agent}
	\label{sec:attacker}
	In order to study poisoning attacks in trending topics, we develop prototype-based attacker agents to produce topic-specific malicious comments.
	Specifically, they generate poisoning comments based on a predefined prototype and the current observation with LLMs by
	$$
	\tilde{A} \leftarrow Action(f,P,O),
	$$
	where $\tilde{A}$ means the malicious comment, and $P$ is the prototype from a specific attacking target. 
	Moreover, we categorize three types of attackers according to their targets as follows.
	
	\subsubsection{Antisocial Attacker}
	Antisocial attackers aim to undermine users' social confidence by disseminating antisocial comments.
	For example, they often provoke conflicts between users and society to create discord.
	
	
	\subsubsection{Trolling Attacker}
	Trolling attackers intend to provoke, upset, or harass other users by posting offensive contents. They also make conflicts between different groups through disparagement, often targeting issues such as gender and values.
	
	\subsubsection{Rumor Attacker}
	Rumor attackers focus on generating and disseminating rumors about trending topics to obscure the truth. They deliberately foster misunderstandings about specific events and individuals.

	\section{Evaluations}
	\label{sec:alignment}
	
	\subsection{Simulation Settings}
	Based on TrendSim, we conduct simulations on trending topics in social media under poisoning attacks.
	We collect data from 1,000 public users from real-world social media platforms, anonymizing and summarizing their profiles from historical posts (see details in Appendix~\ref{appendix:details_data_collection}).
	We choose 10 trending topics, covering various domains and sentiment.
	We control the proportion of attackers among the total participants to simulate varying degrees of poisoning attacks.
	Due to the majority of Chinese corpus, we utilize GLM-3-turbo~\cite{du2022glm} as the foundation model $f$ in our simulation.
	For each trending topic, we assume the maximum lifecycle as 16 hours.
	
	We conduct evaluations of TrendSim from multiple aspects in this section, then collect and analyze the experimental results in Section~\ref{sec:experiment}.
	
	\subsection{Evaluation on User Agent}
	For user agents, we aim to evaluate their capability of acting as real-world users.
	Specifically, we utilize LLMs to score the consistency between characteristics and expressions of users on a scale from 0 to 1 (see Appendix~\ref{appendix:details_of_llm_eval_user_agent}).
	Our metrics include: (1) \textit{Behavior Consistency}: the consistency between characteristics and actions of users. (2) \textit{Psychology Consistency}: the consistency between characteristics and psychological conditions of users.
	We employ several LLMs as baselines by designing prompts to simulate users, and recruit a human expert as another baseline (see Appendix~\ref{appendix:details_of_baseline_user_agent}).
	
	\begin{table}[h]
		\centering
		\caption{Results of the evaluation on user agents.}
		\vspace{-0.2cm}
		
		\resizebox{\linewidth}{!}
		{
		\begin{tabular}{ccccc}
			\hline
			\hline
			\multirow{2}[4]{*}{\textbf{Methods}} & \multicolumn{4}{c}{\textbf{Behavior Consistency}} \bigstrut\\
			\cline{2-5}          & \textbf{GPT-4} & \textbf{GLM-4} & \textbf{Llama-3} & \textbf{Average} \bigstrut\\
			\hline
			GPT-4 & 0.925  & \underline{0.950}  & 0.830  & 0.902  \bigstrut[t]\\
			GLM-4 & 0.940  & \textbf{0.965} & \underline{0.842}  & \textbf{0.916 } \\
			Llama-3 & \underline{0.944}  & 0.930  & 0.826  & 0.900  \\
			TrendSim & \textbf{0.948 } & 0.945  & \textbf{0.853 } & \underline{0.915}  \\
			Human & 0.935  & \underline{0.950}  & 0.828  & 0.904  \bigstrut[b]\\
			\hline
			\multirow{2}[4]{*}{\textbf{Methods}} & \multicolumn{4}{c}{\textbf{Psychology Consistency}} \bigstrut\\
			\cline{2-5}          & \textbf{GPT-4} & \textbf{GLM-4} & \textbf{Llama-3} & \textbf{Average} \bigstrut\\
			\hline
			GPT-4 & 0.745  & \underline{0.767}  & 0.760  & 0.757  \bigstrut[t]\\
			GLM-4 & \underline{0.930}  & \textbf{0.776} & 0.767  & \underline{0.824}  \\
			Llama-3 & 0.705  & 0.640  & 0.768  & 0.704  \\
			TrendSim & \textbf{0.960 } & 0.745  & \underline{0.773}  & \textbf{0.826 } \\
			Human & 0.910  & 0.753  & \textbf{0.794 } & 0.819  \bigstrut[b]\\
			\hline
			\hline
		\end{tabular}
		}
		\vspace{-0.2cm}
		\label{tab:user_agent_alignment}%
	\end{table}%
	
	The results are shown in Table~\ref{tab:user_agent_alignment}, and we put the error bars in Appendix~\ref{appendix:user_agent_alignment_error_bar} due to the page limitation.
	We find that TrendSim achieves great performance in most cases, showing its capability of simulating users in trending topics.
	We also observe that humans may not do well in playing the roles of others in our scenario.
	However, the results among different LLM evaluators can be various.
	
	\subsection{Evaluation on Attacker Agent}
	For attacker agents, we evaluate their generated poisoning comments on two aspects: (1) \textit{Consistency}: the relevance between comments and viewed contents. (2) \textit{Concealment}: the ability to evade malicious detection. We show details in Appendix~\ref{appendix:details_of_llm_eval_attacker_agent}.
	We also utilize LLMs and human as baselines (see Appendix~\ref{appendix:details_of_baseline_attacker_agent}).
	The results are shown in Table~\ref{tab:attacker_alignment}, and we put the error bars in Appendix~\ref{appendix:attacker_agent_alignment_error_bar}.
	
	\begin{table}[h]
		\centering
		\caption{Results of the evaluation on attacker agents.}
		\vspace{-0.2cm}
		\resizebox{\linewidth}{!}
			{
				\begin{tabular}{ccccc}
					\hline
					\hline
					\multirow{2}[4]{*}{\textbf{Methods}} & \multicolumn{4}{c}{\textbf{Consistency}} \bigstrut\\
					\cline{2-5}          & \textbf{GPT-4} & \textbf{GLM-4} & \textbf{Llama-3} & \textbf{Average} \bigstrut\\
					\hline
					GPT-4 & 0.368  & \underline{0.675}  & 0.744  & 0.596  \bigstrut[t]\\
					GLM-4 & 0.435  & 0.644  & 0.755  & 0.611  \\
					Llama-3 & 0.320  & 0.475  & 0.735  & 0.510  \\
					TrendSim & \textbf{0.815 } & \textbf{0.905 } & \textbf{0.792 } & \textbf{0.837 } \\
					Human & \underline{0.575}  & 0.540  & \underline{0.784}  & \underline{0.633}  \bigstrut[b]\\
					\hline
					\multirow{2}[4]{*}{\textbf{Methods}} & \multicolumn{4}{c}{\textbf{Concealment}} \bigstrut\\
					\cline{2-5}          & \textbf{GPT-4} & \textbf{GLM-4} & \textbf{Llama-3} & \textbf{Average} \bigstrut\\
					\hline
					GPT-4 & 0.342  & 0.380  & 0.325  & 0.349  \bigstrut[t]\\
					GLM-4 & 0.445  & 0.370  & 0.317  & 0.377  \\
					Llama-3 & 0.475  & 0.370  & 0.291  & 0.379  \\
					TrendSim & \textbf{0.770 } & \underline{0.449}  & \underline{0.335}  & \textbf{0.518 } \\
					Human & \underline{0.756}  & \textbf{0.453 } & \textbf{0.340 } & \underline{0.516}  \bigstrut[b]\\
					\hline
					\hline
				\end{tabular}
			}
			\vspace{-0.5cm}
			\label{tab:attacker_alignment}%
		\end{table}%
	
	We find that our prototype-based attackers outperform in most cases, indicating the effectiveness of our mechanism. Moreover, it also shows that the poisoning attacks from LLMs and agents can be more severe than human attackers, and harder to be detected as well.

	\subsection{Evaluation on Multi-agent System}
	Besides evaluating user agents and attacker agents in the single-agent view, we also evaluate the multi-agent system from the interactive system perspective.
	We focus on two metrics: (1) \textit{Rationality}: the rationality of discussions from the comments. (2) \textit{Diversity}: the distinction among different users. Details are in Appendix~\ref{appendix:details_of_llm_eval_system}.
	The results are shown in Table~\ref{tab:system_alignment}, with the error bars in Appendix~\ref{appendix:system_alignment}. The results show that TrendSim has a great performance, where most of rationality scores are above 0.8 and most diversity scores are above 0.7.
	
	\begin{table}[h]
		\centering
		\caption{Results of the evaluation on multi-agent system.}
		\vspace{-0.2cm}
			\resizebox{\linewidth}{!}
			{
				\begin{tabular}{ccccc}
				    \hline
				    \hline
				    \multirow{2}[4]{*}{\textbf{Sentiment}} & \multicolumn{4}{c}{\textbf{Rationality}} \bigstrut\\
				    \cline{2-5}          & \textbf{GPT-4} & \textbf{GLM-4} & \textbf{Llama-3} & \textbf{Average} \bigstrut\\
				    \hline
				    Positive & 0.865  & 0.830  & 0.840  & 0.845  \bigstrut[t]\\
				    Negative & 0.775  & 0.725  & 0.775  & 0.758  \\
				    Neutral & 0.815  & 0.815  & 0.817  & 0.816  \\
				    All   & 0.818  & 0.790  & 0.811  & 0.806  \\
				    \hline
				    \multirow{2}[3]{*}{\textbf{Sentiment}} & \multicolumn{4}{c}{\textbf{Diversity}} \bigstrut[b]\\
				    \cline{2-5}          & \textbf{GPT-4} & \textbf{GLM-4} & \textbf{Llama-3} & \textbf{Average} \bigstrut\\
				    \hline
				    Positive & 0.785  & 0.770  & 0.835  & 0.797  \bigstrut[t]\\
				    Negative & 0.746  & 0.760  & 0.805  & 0.770  \\
				    Neutral & 0.685  & 0.765  & 0.744  & 0.731  \\
				    All   & 0.739  & 0.765  & 0.795  & 0.766  \bigstrut[b]\\
				    \hline
				    \hline
			    \end{tabular}%
			}
			\vspace{-0.4cm}
			\label{tab:system_alignment}%
		\end{table}%
	
	\subsection{Evaluation on Simulation Efficiency}
	The efficiency of simulations is crucial for researchers because increased efficiency leads to reduced time costs and enhances the capability to scale to larger user bases.
	Therefore, we evaluate the time cost of our simulations on TrendSim.
	All the experiments are conducted under \textit{Intel\textsuperscript{\textregistered} Xeon\textsuperscript{\textregistered} Gold-5118 (48 Core)} CPU and GLM-3-turbo API.
	The results are shown in Table~\ref{tab:time_cost}, and we find that our simulations take around 16 hours to simulate a trending topic with 1,000 participants in social media.
	It is promising to further improve the efficiency with parallel techniques in future works.
	
	\begin{table}[t]
		\centering
		\caption{The reference of time cost with different numbers of participant agents. SE, PA-10, PA-30, and PA-50 refer to the simulation with 0\%, 10\%, 30\% and 50\% attacker agents respectively.}
		\vspace{-0.2cm}
		\resizebox{\linewidth}{!}
		{
		\begin{tabular}{ccccccc}    
			\hline
			\hline
			\multirow{2}[4]{*}{\textbf{Degree}} & \multicolumn{6}{c}{\textbf{Time Cost (hours)}} \bigstrut\\
			\cline{2-7}          & \textbf{\# 10} & \textbf{\# 50} & \textbf{\# 100} & \textbf{\# 200} & \textbf{\# 500} & \textbf{\# 1000} \bigstrut\\
			\hline
			SE    & 0.2   & 0.7   & 1.6   & 3.3   & 6.5   & 16 \bigstrut[t]\\
			PA-10 & 0.1   & 0.6   & 1.5   & 2.8   & 6.1   & 15 \\
			PA-30 & 0.8 $\times 10^{-1}$   & 0.4   & 1.2   & 2.2   & 4.9   & 12 \\
			PA-50 & 0.6 $\times 10^{-1}$   & 0.3   & 1.0   & 1.7   & 4.0   & 8.0 \bigstrut[b]\\
			\hline
			\hline
		\end{tabular}%
		}
		\vspace{-0.4cm}
		\label{tab:time_cost}%
	\end{table}%

	\section{Experiments}
	\label{sec:experiment}
	Based on TrendSim, we conduct simulation experiments and analyze their results.
	We focus on four critical problems about poisoning attacks of trending topics in social media.

	\noindent \textbf{Problem 1: What negative impact do poisoning attacks have on trending topics?}
	
	\begin{table*}[!h]
		\centering
		\caption{Results of the negative impact of poisoning attacks on trending topics in simulation experiments. \textit{Average} indicates the mean value of all user conditions in single trending topic, and \textit{Divergence} means the standard deviation of all user conditions in single trending topic. Their means and standard deviations~(denoted by $\pm$) are calculated across all trending topics of that group.}
		
		\resizebox{\textwidth}{!}
		{
			\begin{tabular}{>{\centering\arraybackslash}p{2.0cm}>{\centering\arraybackslash}p{1.6cm}>{\centering\arraybackslash}p{2.8cm}>{\centering\arraybackslash}p{2.8cm}>{\centering\arraybackslash}p{2.8cm}>{\centering\arraybackslash}p{2.8cm}}
				\hline
				\hline
				\multirow{2}[4]{*}{\textbf{Groups}} & \multirow{2}[4]{*}{\textbf{Degrees}} & \multicolumn{2}{c}{\textbf{Emotion}} & \multicolumn{2}{c}{\textbf{Social Confidence}} \bigstrut\\
				\cline{3-6}      &       & \textbf{Average} & \textbf{Divergence} & \textbf{Average} & \textbf{Divergence} \bigstrut\\
				\hline
				\multirow{4}[2]{*}{Positive} & SE    & 0.886$\pm$0.057 & 0.140$\pm$0.040 & 0.905$\pm$0.040 & 0.109$\pm$0.031 \bigstrut[t]\\
				& PA-10 & 0.812$\pm$0.145 & 0.151$\pm$0.028 & 0.836$\pm$0.126 & 0.132$\pm$0.034 \\
				& PA-30 & 0.819$\pm$0.081 & 0.180$\pm$0.024 & 0.845$\pm$0.068 & 0.152$\pm$0.030 \\
				& PA-50 & 0.813$\pm$0.048 & 0.190$\pm$0.017 & 0.836$\pm$0.035 & 0.168$\pm$0.015 \bigstrut[b]\\
				\hline
				\multirow{4}[2]{*}{Negative} & SE    & 0.443$\pm$0.002 & 0.065$\pm$0.011 & 0.457$\pm$0.004 & 0.078$\pm$0.011 \bigstrut[t]\\
				& PA-10 & 0.429$\pm$0.000 & 0.066$\pm$0.005 & 0.441$\pm$0.009 & 0.081$\pm$0.005 \\
				& PA-30 & 0.430$\pm$0.008 & 0.067$\pm$0.009 & 0.443$\pm$0.020 & 0.084$\pm$0.009 \\
				& PA-50 & 0.442$\pm$0.007 & 0.071$\pm$0.002 & 0.457$\pm$0.023 & 0.083$\pm$0.004 \bigstrut[b]\\
				\hline
				\multirow{4}[2]{*}{Netural} & SE    & 0.525$\pm$0.083 & 0.120$\pm$0.056 & 0.570$\pm$0.122 & 0.123$\pm$0.049 \bigstrut[t]\\
				& PA-10 & 0.509$\pm$0.078 & 0.114$\pm$0.057 & 0.550$\pm$0.117 & 0.120$\pm$0.049 \\
				& PA-30 & 0.509$\pm$0.073 & 0.117$\pm$0.053 & 0.552$\pm$0.110 & 0.121$\pm$0.048 \\
				& PA-50 & 0.490$\pm$0.058 & 0.104$\pm$0.048 & 0.523$\pm$0.091 & 0.117$\pm$0.049 \bigstrut[b]\\
				\hline
				\multirow{4}[2]{*}{All} & SE    & 0.653$\pm$0.203 & 0.117$\pm$0.052 & 0.681$\pm$0.204 & 0.109$\pm$0.041 \bigstrut[t]\\
				& PA-10 & 0.614$\pm$0.194 & 0.119$\pm$0.051 & 0.643$\pm$0.196 & 0.117$\pm$0.042 \\
				& PA-30 & 0.617$\pm$0.181 & 0.132$\pm$0.057 & 0.647$\pm$0.185 & 0.126$\pm$0.044 \\
				& PA-50 & 0.610$\pm$0.174 & 0.131$\pm$0.059 & 0.635$\pm$0.177 & 0.131$\pm$0.046 \bigstrut[b]\\
				\hline
				\hline
			\end{tabular}
		}
		\label{tab:issue_01}%
	\end{table*}%
	
	As we have discussed above, poisoning attacks can lead to negative impacts through trending topics in social media.
	Therefore, we intend to verify and analyze this phenomenon with TrendSim.
	Specifically, we study the conditions of the user's psychology during the simulation, and compare them among four levels of attacks.
	The metrics of psychological conditions include \textit{Emotion} and \textit{Social Confidence} that we have introduced in Section~\ref{sec:memory_module}, expressed ranging from 0 to 1.
	The levels of attacks include:
	(1) \textit{SE}: no attackers injected.
	(2) \textit{PA-10}: 10\% attackers injected.
	(3) \textit{PA-30}: 30\% attackers injected.
	(4) \textit{PA-50}: 50\% attackers injected.
	These attackers are introduced according to Section~\ref{sec:attacker}.
	In addition, we categorize the trending topics into three sentiment groups, including positive, negative, and neutral.
	
	We focus at the end time of simulations, calculating the average conditions and divergences among users.
	The results are shown in Table~\ref{tab:issue_01}.
	We find that most poisoning attacks have negative impacts on users in social media, but their effects vary in different groups.
	Compared with the negative and neutral groups, the positive group is affected the most, potentially due to the larger contrast between poisoning comments and normal comments.
	Moreover, we find that the results are not proportional to the levels of attacks, where fewer attackers can probably cause larger impacts.

	\noindent \textbf{Problem 2: How do users' psychological conditions dynamically change over time?}
	
	We further study the dynamic changes in users' psychological conditions over time.
	Specifically, we track the psychological conditions of users along the timeline, drawing the curves in Figure~\ref{fig:attack_overtime}.
	For better demonstration, we show the results relative to that in SE.
	We find that a sharp decrease happens in the middle of time, which is also the moment when a large number of users enter simultaneously.
	Moreover, PA-50 exhibits the most negative impacts throughout the entire process.
	Due to the page limitation, we show the results of different groups in Appendix~\ref{appendix:issue_02_group}.
	
	\begin{figure}[t]
		\centering
		\begin{subfigure}[b]{0.48\linewidth}
			\includegraphics[width=\linewidth]{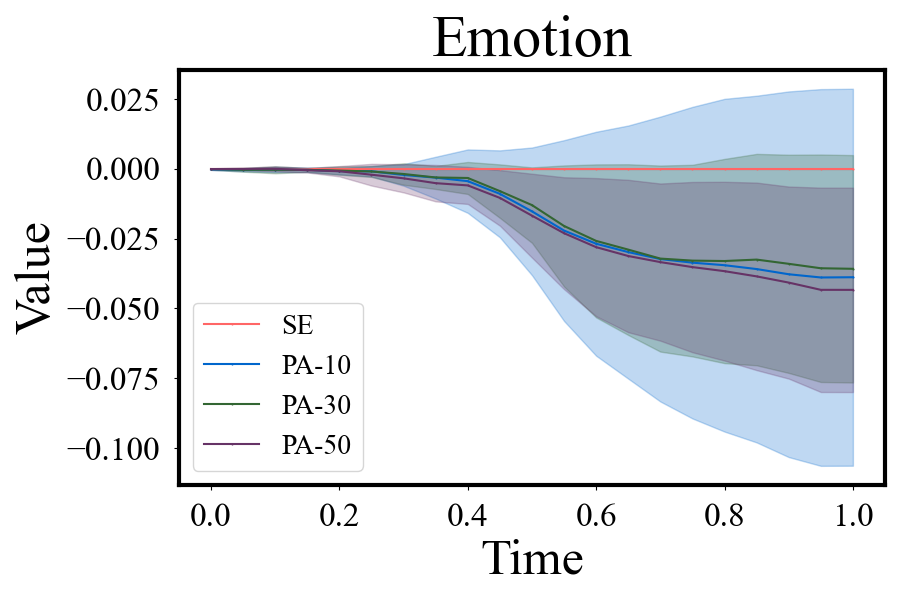}
		\end{subfigure}
		\hfil
		\begin{subfigure}[b]{0.48\linewidth}
			\includegraphics[width=\linewidth]{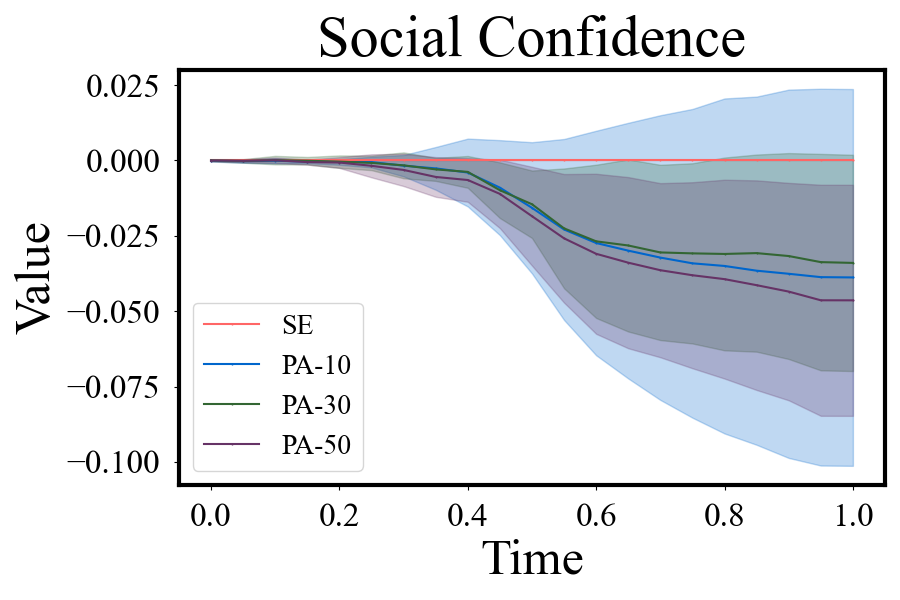}
		\end{subfigure}
		\caption{The user's psychological conditions in social media platforms over time. The curves represent mean values, and the shading represents standard deviations.}
		\vspace{-0.5cm}
		\label{fig:attack_overtime}
	\end{figure}

	\noindent \textbf{Problem 3: What types of users are more susceptible to poisoning attacks in trending topics?}
	
	We further study what types of users are more susceptible to poisoning attacks.
	Specifically, we categorize all these 1,000 users into six groups according to their preferences, including  \textit{Entertainment}, \textit{Sports}, \textit{Lifestyle}, \textit{Society}, \textit{Culture} and \textit{Technology}.
	We present the proportion of each type in Figure~\ref{fig:issue_03}(a). It shows that the entertainment group and society group are two dominant groups.
	
	The simulation results are shown in Figure~\ref{fig:issue_03}(b) and Figure~\ref{fig:issue_03}(c).
	We find that people who are interested in social topics are most susceptible to poisoning attacks, which aligns with our intuition. However, to our surprise, we find that the entertainment group suffers the least impact on poisoning attacks, compared with other groups.

	\begin{figure*}[t]
		\centering
		\begin{subfigure}[b]{0.32\textwidth}
			\includegraphics[width=\textwidth]{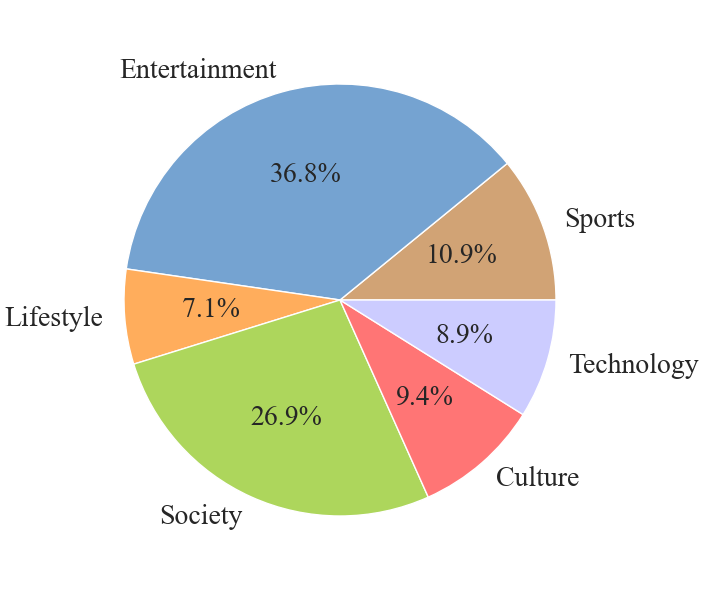}
			\caption{Distribution of user groups.}
			\label{fig:user_group}
		\end{subfigure}
		\begin{subfigure}[b]{0.32\textwidth}
			\includegraphics[width=\textwidth]{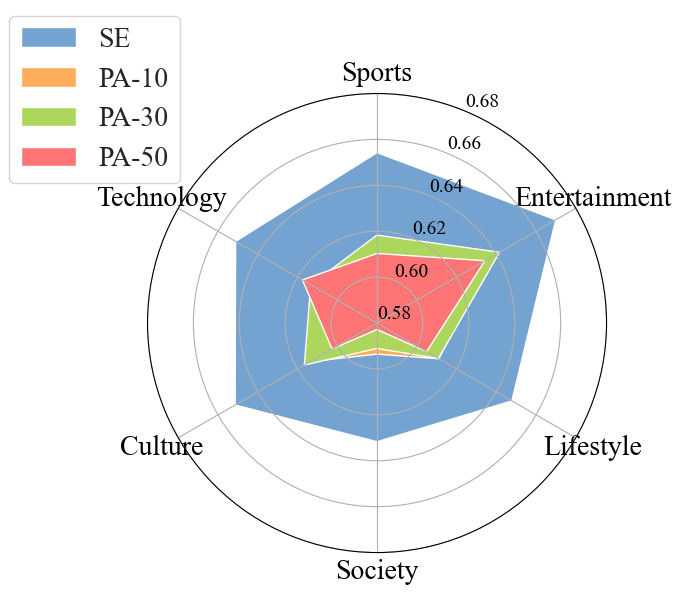}
			\caption{Emotion.}
			\label{fig:exp03_em}
		\end{subfigure}
		\begin{subfigure}[b]{0.32\textwidth}
			\includegraphics[width=\textwidth]{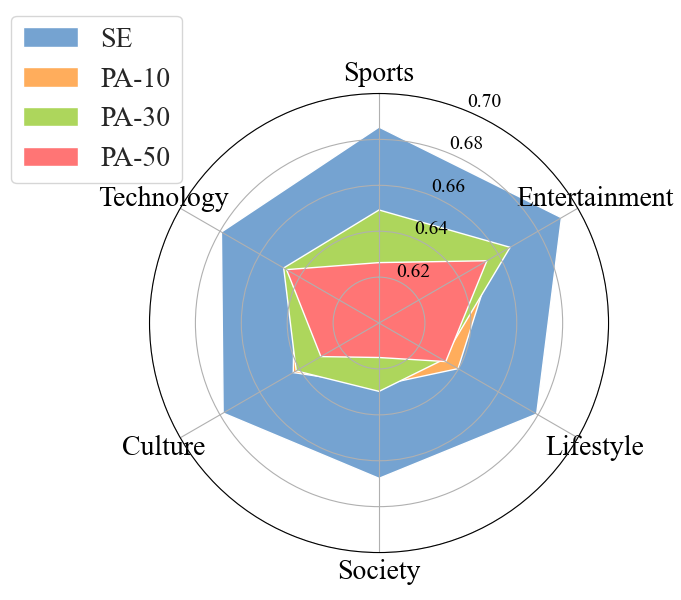}
			\caption{Social Confidence.}
			\label{fig:exp03_sc}
		\end{subfigure}
		\caption{The user psychological conditions in different groups of simulation experiments.}
		\vspace{-0.4cm}
		\label{fig:issue_03}
	\end{figure*}
	
	\noindent \textbf{Problem 4: How effectively the content censorship defend against poisoning attacks?}
	
	In order to mitigate the negative impact of poisoning attacks, social media platforms commonly devise defensive strategies against malicious contents.
	One of the most common methods for protection is content censorship, which aims to detect and filter poisoning comments.
	In this part, we further conduct simulation experiments to assess the effectiveness of content censorship in defending against poisoning attacks.
	Specifically, we run PA-50 simulations with the content censorship implemented by LLMs.
	The results are shown in Appendix~\ref{appendix:issue_04}. We find that the content censorship mechanism can effectively mitigate the negative impact of poisoning attacks in most cases.

	\section{Conclusion}
	\label{sec:conclusion}
	
	In this paper, we propose an LLM-based multi-agent system to simulate trending topics in social media under poisoning attacks, named TrendSim.
	By implementing the time-aware interaction mechanism, the centralized message dissemination, and an interactive system, we develop an environment for simulating trending topics.
	We also propose LLM-based human-like agents to imitate users, and design prototype-based attackers to simulate poisoning attacks.
	Our evaluations show the effectiveness and efficiency of TrendSim.
	Based on it, we conduct simulations to study four critical problems about poisoning attacks on trending topics, and draw conclusions of them.
	
	We hope our work can make contributions to society, thereby achieving a warm and responsible usage of artificial intelligence.
	In future works, it is promising to focus on the memory mechanism of LLM-based agents for social simulations, which is a critical part of role-playing and personalization.
        It is also be valuable to implement larger-scale social simulations for various applications.
	
	\section*{Limitations}
        In this work, we propose TrendSim to simulate trending topics in social media under poisoning attacks.
        However, there are still certain limitations.
        First of all, TrendSim currently conducts simulations solely in textual form, without multi-modal information that is important in social media as well. For instance, fake photos can also lead to poisoning attacks in trending topics.
        Second, like previous works on social simulations, TrendSim is also based on a series of assumptions. These assumptions arise from unobserved factors in the real world that researchers cannot fully account for.
        Consequently, due to these assumptions, some biases may exist in our simulations, limiting their ability to accurately replicate all real-world details. This alignment problem also occurs in other applications of social simulations.

	\section*{Ethical Impacts}
 	TrendSim aids in uncovering insights into poisoning attacks on social media trends, which is beneficial for developing defense mechanisms.
	However, every coin has two sides. The availability of TrendSim may inadvertently contribute to the evolution of attackers, who could adapt and evolve in response to the defense mechanisms employed on social media platforms.

\bibliography{custom}

\begin{thebibliography}{26}
\providecommand{\natexlab}[1]{#1}

\bibitem[{A{\"\i}meur et~al.(2019)A{\"\i}meur, D{\'\i}az~Ferreyra, and
  Hage}]{aimeur2019manipulation}
Esma A{\"\i}meur, Nicol{\'a}s D{\'\i}az~Ferreyra, and Hicham Hage. 2019.
\newblock Manipulation and malicious personalization: exploring the
  self-disclosure biases exploited by deceptive attackers on social media.
\newblock \emph{Frontiers in artificial intelligence}, 2:26.

\bibitem[{Barsky and DeRose(1989)}]{barsky1989geometric}
Brian~A Barsky and Tony~D DeRose. 1989.
\newblock Geometric continuity of parametric curves: three equivalent
  characterizations.
\newblock \emph{IEEE Computer Graphics and Applications}, 9(6):60--69.

\bibitem[{Briscoe et~al.(2014)Briscoe, Appling, and Hayes}]{briscoe2014cues}
Erica~J Briscoe, D~Scott Appling, and Heather Hayes. 2014.
\newblock Cues to deception in social media communications.
\newblock In \emph{2014 47th Hawaii international conference on system
  sciences}, pages 1435--1443. IEEE.

\bibitem[{Chinivar et~al.(2022)Chinivar, Roopa, Arunalatha, and
  Venugopal}]{chinivar2022online}
Sneha Chinivar, MS~Roopa, JS~Arunalatha, and KR~Venugopal. 2022.
\newblock Online offensive behaviour in socialmedia: Detection approaches,
  comprehensive review and future directions.
\newblock \emph{Entertainment Computing}, page 100544.

\bibitem[{Du et~al.(2022)Du, Qian, Liu, Ding, Qiu, Yang, and Tang}]{du2022glm}
Zhengxiao Du, Yujie Qian, Xiao Liu, Ming Ding, Jiezhong Qiu, Zhilin Yang, and
  Jie Tang. 2022.
\newblock Glm: General language model pretraining with autoregressive blank
  infilling.
\newblock In \emph{Proceedings of the 60th Annual Meeting of the Association
  for Computational Linguistics (Volume 1: Long Papers)}, pages 320--335.

\bibitem[{Gao et~al.(2023{\natexlab{a}})Gao, Lan, Li, Yuan, Ding, Zhou, Xu, and
  Li}]{gao2023large}
Chen Gao, Xiaochong Lan, Nian Li, Yuan Yuan, Jingtao Ding, Zhilun Zhou, Fengli
  Xu, and Yong Li. 2023{\natexlab{a}}.
\newblock Large language models empowered agent-based modeling and simulation:
  A survey and perspectives.
\newblock \emph{arXiv preprint arXiv:2312.11970}.

\bibitem[{Gao et~al.(2023{\natexlab{b}})Gao, Lan, Lu, Mao, Piao, Wang, Jin, and
  Li}]{gao2023s}
Chen Gao, Xiaochong Lan, Zhihong Lu, Jinzhu Mao, Jinghua Piao, Huandong Wang,
  Depeng Jin, and Yong Li. 2023{\natexlab{b}}.
\newblock S$^3$: Social-network simulation system with large language
  model-empowered agents.
\newblock \emph{arXiv preprint arXiv:2307.14984}.

\bibitem[{Hua et~al.(2023)Hua, Fan, Li, Mei, Ji, Ge, Hemphill, and
  Zhang}]{hua2023war}
Wenyue Hua, Lizhou Fan, Lingyao Li, Kai Mei, Jianchao Ji, Yingqiang Ge, Libby
  Hemphill, and Yongfeng Zhang. 2023.
\newblock War and peace (waragent): Large language model-based multi-agent
  simulation of world wars.
\newblock \emph{arXiv preprint arXiv:2311.17227}.

\bibitem[{Khurana et~al.(2019)Khurana, Mittal, Piplai, and
  Joshi}]{khurana2019preventing}
Nitika Khurana, Sudip Mittal, Aritran Piplai, and Anupam Joshi. 2019.
\newblock Preventing poisoning attacks on ai based threat intelligence systems.
\newblock In \emph{2019 IEEE 29th International Workshop on Machine Learning
  for Signal Processing (MLSP)}, pages 1--6. IEEE.

\bibitem[{Kova{\v{c}} et~al.(2023)Kova{\v{c}}, Portelas, Dominey, and
  Oudeyer}]{kovavc2023socialai}
Grgur Kova{\v{c}}, R{\'e}my Portelas, Peter~Ford Dominey, and Pierre-Yves
  Oudeyer. 2023.
\newblock The socialai school: Insights from developmental psychology towards
  artificial socio-cultural agents.
\newblock \emph{arXiv preprint arXiv:2307.07871}.

\bibitem[{Kunwar and Sharma(2016)}]{kunwar2016social}
Rakesh~Singh Kunwar and Priyanka Sharma. 2016.
\newblock Social media: A new vector for cyber attack.
\newblock In \emph{2016 International Conference on Advances in Computing,
  Communication, \& Automation (ICACCA)(Spring)}, pages 1--5. IEEE.

\bibitem[{Lerman and Hogg(2010)}]{lerman2010using}
Kristina Lerman and Tad Hogg. 2010.
\newblock Using a model of social dynamics to predict popularity of news.
\newblock In \emph{Proceedings of the 19th international conference on World
  wide web}, pages 621--630.

\bibitem[{Lin et~al.(2023)Lin, Zhao, Zhang, Wu, Ping, and
  Chen}]{lin2023agentsims}
Jiaju Lin, Haoran Zhao, Aochi Zhang, Yiting Wu, Huqiuyue Ping, and Qin Chen.
  2023.
\newblock Agentsims: An open-source sandbox for large language model
  evaluation.
\newblock \emph{arXiv preprint arXiv:2308.04026}.

\bibitem[{Ouyang et~al.(2022)Ouyang, Wu, Jiang, Almeida, Wainwright, Mishkin,
  Zhang, Agarwal, Slama, Ray et~al.}]{ouyang2022training}
Long Ouyang, Jeffrey Wu, Xu~Jiang, Diogo Almeida, Carroll Wainwright, Pamela
  Mishkin, Chong Zhang, Sandhini Agarwal, Katarina Slama, Alex Ray, et~al.
  2022.
\newblock Training language models to follow instructions with human feedback.
\newblock \emph{Advances in Neural Information Processing Systems},
  35:27730--27744.

\bibitem[{Park et~al.(2023)Park, O'Brien, Cai, Morris, Liang, and
  Bernstein}]{park2023generative}
Joon~Sung Park, Joseph~C. O'Brien, Carrie~J. Cai, Meredith~Ringel Morris, Percy
  Liang, and Michael~S. Bernstein. 2023.
\newblock Generative agents: Interactive simulacra of human behavior.
\newblock In \emph{In the 36th Annual ACM Symposium on User Interface Software
  and Technology (UIST '23}, UIST '23, New York, NY, USA. Association for
  Computing Machinery.

\bibitem[{Qin et~al.(2023)Qin, Liang, Ye, Zhu, Yan, Lu, Lin, Cong, Tang, Qian
  et~al.}]{qin2023toolllm}
Yujia Qin, Shihao Liang, Yining Ye, Kunlun Zhu, Lan Yan, Yaxi Lu, Yankai Lin,
  Xin Cong, Xiangru Tang, Bill Qian, et~al. 2023.
\newblock {ToolLLM}: Facilitating large language models to master 16000+
  real-world apis.
\newblock \emph{arXiv preprint arXiv:2307.16789}.

\bibitem[{Shinn et~al.(2023)Shinn, Cassano, Labash, Gopinath, Narasimhan, and
  Yao}]{shinn2023reflexion}
Noah Shinn, Federico Cassano, Beck Labash, Ashwin Gopinath, Karthik Narasimhan,
  and Shunyu Yao. 2023.
\newblock Reflexion: Language agents with verbal reinforcement learning.
\newblock \emph{arXiv preprint arXiv:2303.11366}.

\bibitem[{Solso and Kagan(1979)}]{solso1979cognitive}
Robert~L Solso and Jerome Kagan. 1979.
\newblock \emph{Cognitive psychology}.
\newblock Houghton Mifflin Harcourt P.

\bibitem[{van Emde~Boas et~al.(1976)van Emde~Boas, Kaas, and
  Zijlstra}]{van1976design}
Peter van Emde~Boas, Robert Kaas, and Erik Zijlstra. 1976.
\newblock Design and implementation of an efficient priority queue.
\newblock \emph{Mathematical systems theory}, 10(1):99--127.

\bibitem[{Wang et~al.(2023{\natexlab{a}})Wang, Ma, Feng, Zhang, Yang, Zhang,
  Chen, Tang, Chen, Lin et~al.}]{wang2023survey}
Lei Wang, Chen Ma, Xueyang Feng, Zeyu Zhang, Hao Yang, Jingsen Zhang, Zhiyuan
  Chen, Jiakai Tang, Xu~Chen, Yankai Lin, et~al. 2023{\natexlab{a}}.
\newblock A survey on large language model based autonomous agents.
\newblock \emph{arXiv preprint arXiv:2308.11432}.

\bibitem[{Wang et~al.(2023{\natexlab{b}})Wang, Zhang, Yang, Chen, Tang, Zhang,
  Chen, Lin, Song, Zhao, Xu, Dou, Wang, and Wen}]{wang2023large}
Lei Wang, Jingsen Zhang, Hao Yang, Zhiyuan Chen, Jiakai Tang, Zeyu Zhang,
  Xu~Chen, Yankai Lin, Ruihua Song, Wayne~Xin Zhao, Jun Xu, Zhicheng Dou, Jun
  Wang, and Ji-Rong Wen. 2023{\natexlab{b}}.
\newblock \href {https://arxiv.org/abs/2306.02552} {When large language model
  based agent meets user behavior analysis: A novel user simulation paradigm}.
\newblock \emph{Preprint}, arXiv:2306.02552.

\bibitem[{Wang et~al.(2023{\natexlab{c}})Wang, Cai, Liu, Ma, and
  Liang}]{wang2023describe}
Zihao Wang, Shaofei Cai, Anji Liu, Xiaojian Ma, and Yitao Liang.
  2023{\natexlab{c}}.
\newblock Describe, explain, plan and select: Interactive planning with large
  language models enables open-world multi-task agents.
\newblock \emph{arXiv preprint arXiv:2302.01560}.

\bibitem[{Xi et~al.(2023)Xi, Chen, Guo, He, Ding, Hong, Zhang, Wang, Jin, Zhou
  et~al.}]{xi2023rise}
Zhiheng Xi, Wenxiang Chen, Xin Guo, Wei He, Yiwen Ding, Boyang Hong, Ming
  Zhang, Junzhe Wang, Senjie Jin, Enyu Zhou, et~al. 2023.
\newblock The rise and potential of large language model based agents: A
  survey.
\newblock \emph{arXiv preprint arXiv:2309.07864}.

\bibitem[{Zhang et~al.(2024)Zhang, Bo, Ma, Li, Chen, Dai, Zhu, Dong, and
  Wen}]{zhang2024survey}
Zeyu Zhang, Xiaohe Bo, Chen Ma, Rui Li, Xu~Chen, Quanyu Dai, Jieming Zhu,
  Zhenhua Dong, and Ji-Rong Wen. 2024.
\newblock A survey on the memory mechanism of large language model based
  agents.
\newblock \emph{arXiv preprint arXiv:2404.13501}.

\bibitem[{Zhao et~al.(2023)Zhao, Zhou, Li, Tang, Wang, Hou, Min, Zhang, Zhang,
  Dong et~al.}]{zhao2023survey}
Wayne~Xin Zhao, Kun Zhou, Junyi Li, Tianyi Tang, Xiaolei Wang, Yupeng Hou,
  Yingqian Min, Beichen Zhang, Junjie Zhang, Zican Dong, et~al. 2023.
\newblock A survey of large language models.
\newblock \emph{arXiv preprint arXiv:2303.18223}.

\bibitem[{Zhu et~al.(2023)Zhu, Chen, Tian, Tao, Su, Yang, Huang, Li, Lu, Wang
  et~al.}]{zhu2023ghost}
Xizhou Zhu, Yuntao Chen, Hao Tian, Chenxin Tao, Weijie Su, Chenyu Yang, Gao
  Huang, Bin Li, Lewei Lu, Xiaogang Wang, et~al. 2023.
\newblock Ghost in the minecraft: Generally capable agents for open-world
  enviroments via large language models with text-based knowledge and memory.
\newblock \emph{arXiv preprint arXiv:2305.17144}.

\end{thebibliography}

	\clearpage

\appendix
	
	\section{Details of Time Mechanisms}
	\label{appendix:entrance_prob}
	
	For the lifecycle of a trending topic in the time-aware interaction system, we divide it into three stages, which are consistent with the real pattern in social media platforms.
	In the first stage, there is an explosive growth in social attention and user entrance, so we model this stage with an exponential function.
	During the second stage, the growth of users' browsing slows down, and gradually begins to decline. For this phase, we employ a power function with a positive exponent.
	In the final stage, the attention to the trending topic gradually fades, and we use a power function with a negative exponent to model this stage.
	In order to dynamically adjust the curves of different trending posts according to their attractions, we set different hyper-parameters to diversify their functions. Moreover, we also design the function of probability distribution with $G_0$-smooth and $G_1$-smooth. The probability distribution of users for the trending topic is\\
	\resizebox{\linewidth}{!}
	{
		$P(t) \propto
		\left\{
		\begin{aligned}
			& e^{A(t-T_m)} & 0 \le t < T_m, \\
			& - \alpha A (t-T_m-\frac{1}{2\alpha})^2 + 1 + \frac{A}{4 \alpha} & T_m \le t < T_m + \frac{1}{\alpha}, \\
			& (t - T_m - \frac{1}{\alpha} + 1)^{-A} & t \ge T_m + \frac{1}{\alpha},
		\end{aligned}
		\right.$
	}\\
	where $A$ is a parameter to reflect the breaking degree of the trending topic, $\alpha$ and $T_m$ are two hyper-parameters that could adjust the curve.
	Then, we prove the function is $G_0$-smooth and $G_1$-smooth.
	\begin{mytheory}
		The function $P(t)$ is $G_0$-smooth.
	\end{mytheory}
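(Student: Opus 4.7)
The plan is to verify $G_0$-smoothness directly from the definition: since each of the three pieces of $P(t)$ is a composition of elementary continuous functions on its open subdomain, continuity inside each piece is automatic, and the only thing requiring proof is that the one-sided limits agree at the two breakpoints $t = T_m$ and $t = T_m + \tfrac{1}{\alpha}$ where the definition switches.

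First I would handle the breakpoint at $t = T_m$. From the left, the exponential branch evaluates to $e^{A(T_m - T_m)} = e^0 = 1$. From the right, the quadratic branch evaluates to $-\alpha A \bigl(T_m - T_m - \tfrac{1}{2\alpha}\bigr)^2 + 1 + \tfrac{A}{4\alpha} = -\alpha A \cdot \tfrac{1}{4\alpha^2} + 1 + \tfrac{A}{4\alpha}$, and the two $\tfrac{A}{4\alpha}$ terms cancel to leave $1$. So both sides agree.

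Next I would handle the breakpoint at $t = T_m + \tfrac{1}{\alpha}$. From the left, the quadratic branch evaluates at $t - T_m - \tfrac{1}{2\alpha} = \tfrac{1}{2\alpha}$, giving exactly the same cancellation $-\tfrac{A}{4\alpha} + 1 + \tfrac{A}{4\alpha} = 1$. From the right, the power branch evaluates to $(T_m + \tfrac{1}{\alpha} - T_m - \tfrac{1}{\alpha} + 1)^{-A} = 1^{-A} = 1$. Again the values agree, so $P$ is continuous on $[0,\infty)$ up to the common proportionality constant, which establishes $G_0$-smoothness.

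No step here is really an obstacle; the proof is essentially a verification that the hyper-parameters inside the piecewise definition were chosen precisely so that the additive constants $1 + \tfrac{A}{4\alpha}$ in the quadratic piece and the offsets $-\tfrac{1}{2\alpha}$ and $-\tfrac{1}{\alpha} + 1$ inside the quadratic and power branches compensate exactly for the boundary values of the neighboring pieces. The only mild care needed is to remember that $P(t)$ is defined only up to proportionality, so the claim to check is that the un-normalized expressions match at the junctions (which they do, both equalling $1$), after which any single normalization constant preserves continuity globally.
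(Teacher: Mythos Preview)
Your proposal is correct and follows essentially the same approach as the paper: both verify continuity of the piecewise formula at the two junction points $t=T_m$ and $t=T_m+\tfrac{1}{\alpha}$, obtaining the common value $1$ (or $1/S$ after normalization). Your write-up is in fact more explicit about the arithmetic cancellations than the paper's own proof, which simply asserts the limit values.
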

	
	\begin{proof}
		Given the probability distribution above, we denote\\
		\resizebox{\linewidth}{!}
		{
			$f(t) =
			\left\{
			\begin{aligned}
				& e^{A(t-T_m)} & 0 \le t < T_m, \\
				& - \alpha A (t-T_m-\frac{1}{2\alpha})^2 + 1 + \frac{A}{4 \alpha} & T_m \le t < T_m + \frac{1}{\alpha}, \\
				& (t - T_m - \frac{1}{\alpha} + 1)^{-A} & t \ge T_m + \frac{1}{\alpha}.
			\end{aligned}
			\right.$
		}\\
		So we can re-write the function $P(t)$ as $P(t) = \frac{1}{S} \cdot f(t)$, that is\\
		\resizebox{\linewidth}{!}
		{
			$P(t) =
			\left\{
			\begin{aligned}
				& \frac{1}{S}\cdot e^{A(t-T_m)} & 0 \le t < T_m, \\
				& \frac{1}{S}\cdot \left[  - \alpha A (t-T_m-\frac{1}{2\alpha})^2 + 1 + \frac{A}{4 \alpha}\right] & T_m \le t < T_m + \frac{1}{\alpha} , \\
				& \frac{1}{S}\cdot (t - T_m - \frac{1}{\alpha} + 1)^{-A} & t \ge T_m + \frac{1}{\alpha},
			\end{aligned}
			\right.$
		}\\
		where $S=\int_{0}^{T_m + \frac{1}{\alpha}} f(t) dt$ is a normalization for the probability distribution. Obviously, the function $P(t)$ is continuous in its respective segments. Therefore, we just need to demonstrate the continuity at the points where these segments connect:
		\begin{gather*}
			\lim_{t \to {T_m}^-} P(x) = \lim_{t \to {T_m}^+} P(x) = \frac{1}{S}, \\
			\lim_{t \to {T_m+\frac{1}{\alpha}}^-} P(x) = \lim_{t \to {T_m+\frac{1}{\alpha}}^+} P(x) = \frac{1}{S},
		\end{gather*}
		which verifies these segments are continuous at connection posts. Therefore, the function $P(t)$ is $G_0$-smooth.
	\end{proof}
	
	\begin{mytheory}
		The function $P(t)$ is $G_1$-smooth.
	\end{mytheory}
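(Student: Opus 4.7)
The plan is to mirror the structure of the preceding $G_0$-smoothness proof. Since $P(t) = f(t)/S$ with $S$ a positive constant, it suffices to verify that $f(t)$ has matching one-sided first derivatives at the two junction points $t = T_m$ and $t = T_m + 1/\alpha$; the normalization factor scales all derivatives by the same $1/S$ and does not affect continuity. Within each of the three open intervals, $f$ is elementary (exponential, polynomial, and power) and therefore clearly $C^1$, so the only work is at the two seams.

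First I would compute the one-sided derivatives at $t = T_m$. The left branch gives
\[
\left.\frac{d}{dt}e^{A(t-T_m)}\right|_{t \to T_m^-} = A,
\]
while the right (quadratic) branch gives
\[
\left.\frac{d}{dt}\left[-\alpha A\left(t-T_m-\tfrac{1}{2\alpha}\right)^2 + 1 + \tfrac{A}{4\alpha}\right]\right|_{t \to T_m^+}
= -2\alpha A \cdot \left(-\tfrac{1}{2\alpha}\right) = A,
\]
so the derivatives match. Next, at $t = T_m + 1/\alpha$, the left (quadratic) branch yields
\[
-2\alpha A \cdot \left(\tfrac{1}{\alpha}-\tfrac{1}{2\alpha}\right) = -A,
\]
and the right (power) branch yields
\[
\left.\frac{d}{dt}(t-T_m-\tfrac{1}{\alpha}+1)^{-A}\right|_{t \to (T_m+1/\alpha)^+}
= -A \cdot 1^{-A-1} = -A,
\]
so again the two one-sided derivatives agree. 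Dividing through by $S$ gives the corresponding statement for $P(t)$, which establishes $G_1$-smoothness at both junctions.

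There is really no hard step here; the proof is essentially a bookkeeping exercise once one notices that the exponents and coefficients in the piecewise definition have been chosen precisely to make the derivatives cancel to $\pm A$ at the two seams. The only mild subtlety is making sure the chain rule is applied correctly on the power-law piece so that the shift by $1/\alpha - 1$ inside the base reduces the argument to $1$ at the junction, which is what forces the derivative on that side to collapse cleanly to $-A$. With that observation, the result follows immediately from the one-line derivative comparisons above.
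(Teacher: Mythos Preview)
Your proposal is correct and follows essentially the same approach as the paper: compute the piecewise derivative, observe that each branch is smooth on its open interval, and verify that the one-sided first derivatives agree (equal to $A$ and $-A$, respectively, before normalization) at the two junction points $t=T_m$ and $t=T_m+1/\alpha$. The only cosmetic difference is that you work with $f(t)$ and then divide by $S$, whereas the paper carries the factor $1/S$ throughout.
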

	
	\begin{proof}
		We calculate the first-order derivative of $P(t)$.\\
		\resizebox{\linewidth}{!}
		{
			$P'(t) =
			\left\{
			\begin{aligned}
				& \frac{A}{S}\cdot e^{A(t-T_m)} & 0 \le t < T_m, \\
				& -\frac{2\alpha A}{S}\cdot (t-T_m-\frac{1}{2\alpha})  & T_m \le t < T_m + \frac{1}{\alpha} , \\
				& -\frac{A}{S}\cdot (t - T_m - \frac{1}{\alpha} + 1)^{-A-1} & t \ge T_m + \frac{1}{\alpha}.
			\end{aligned}
			\right.$
		}\\
		Obviously, the function $P'(t)$ is continuous in its respective segments. Therefore, we just need to demonstrate the continuity at the points where these segments connect:
		\begin{gather*}
			\lim_{t \to {T_m}^-} P'(x) = \lim_{t \to {T_m}^+} P(x) = \frac{A}{S}\\
			\lim_{t \to {T_m+\frac{1}{\alpha}}^-} P'(x) = \lim_{t \to {T_m+\frac{1}{\alpha}}^+} P'(x) = \frac{-A}{S},
		\end{gather*}
		which verifies these segments are continuous at connection posts. Therefore, the function $P(t)$ is $G_1$-smooth.

	\end{proof}
	
	We draw the curve of the function $f(x)$ as follows in Figure~\ref{fig:app_01}(a). In addition, we show some common populations of trending topics in real-world social media in Figure~\ref{fig:app_01}(b) to support the rationality of our time function.
	
	\begin{figure}[h]
		\centering
		\begin{subfigure}[b]{\linewidth}
			\centering
			\includegraphics[width=0.75\linewidth]{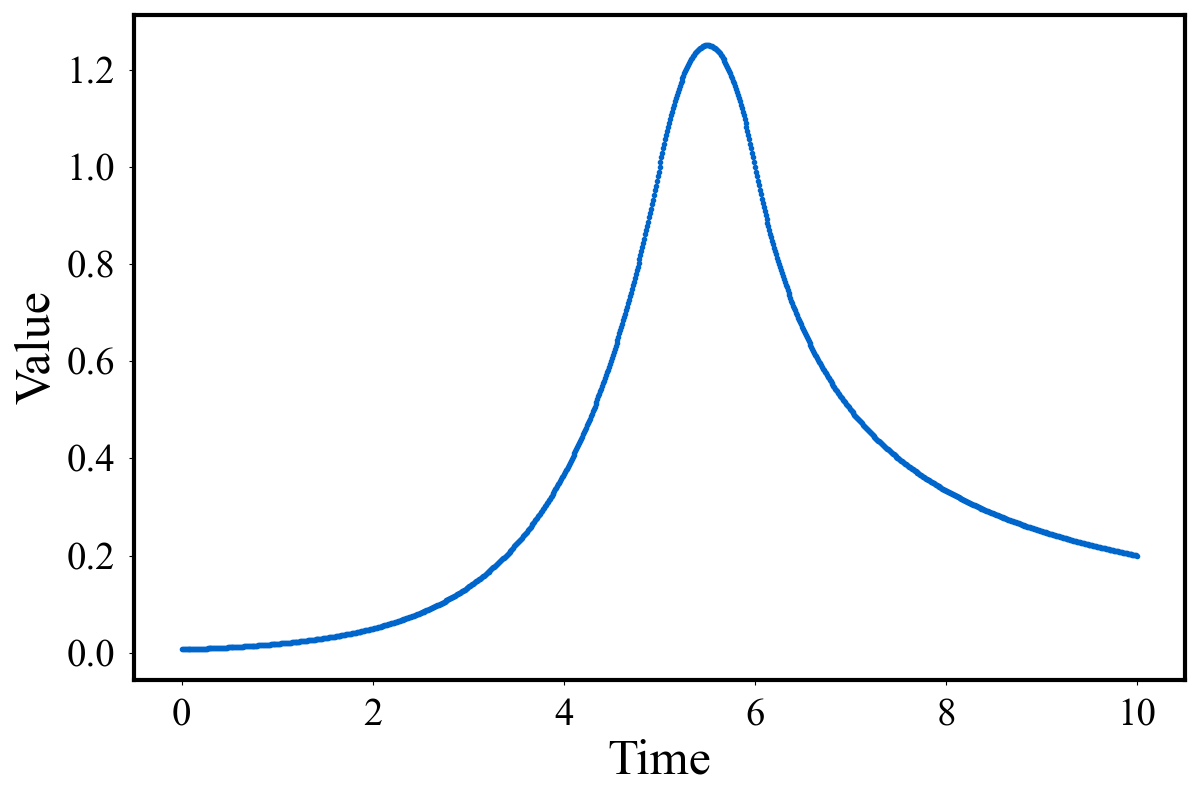}
			\caption{The curve of function $f(t)$.}
			\label{fig:timefunction}
		\end{subfigure}
		\hfill
		\begin{subfigure}[b]{\linewidth}
			\centering
			\includegraphics[width=0.95\linewidth]{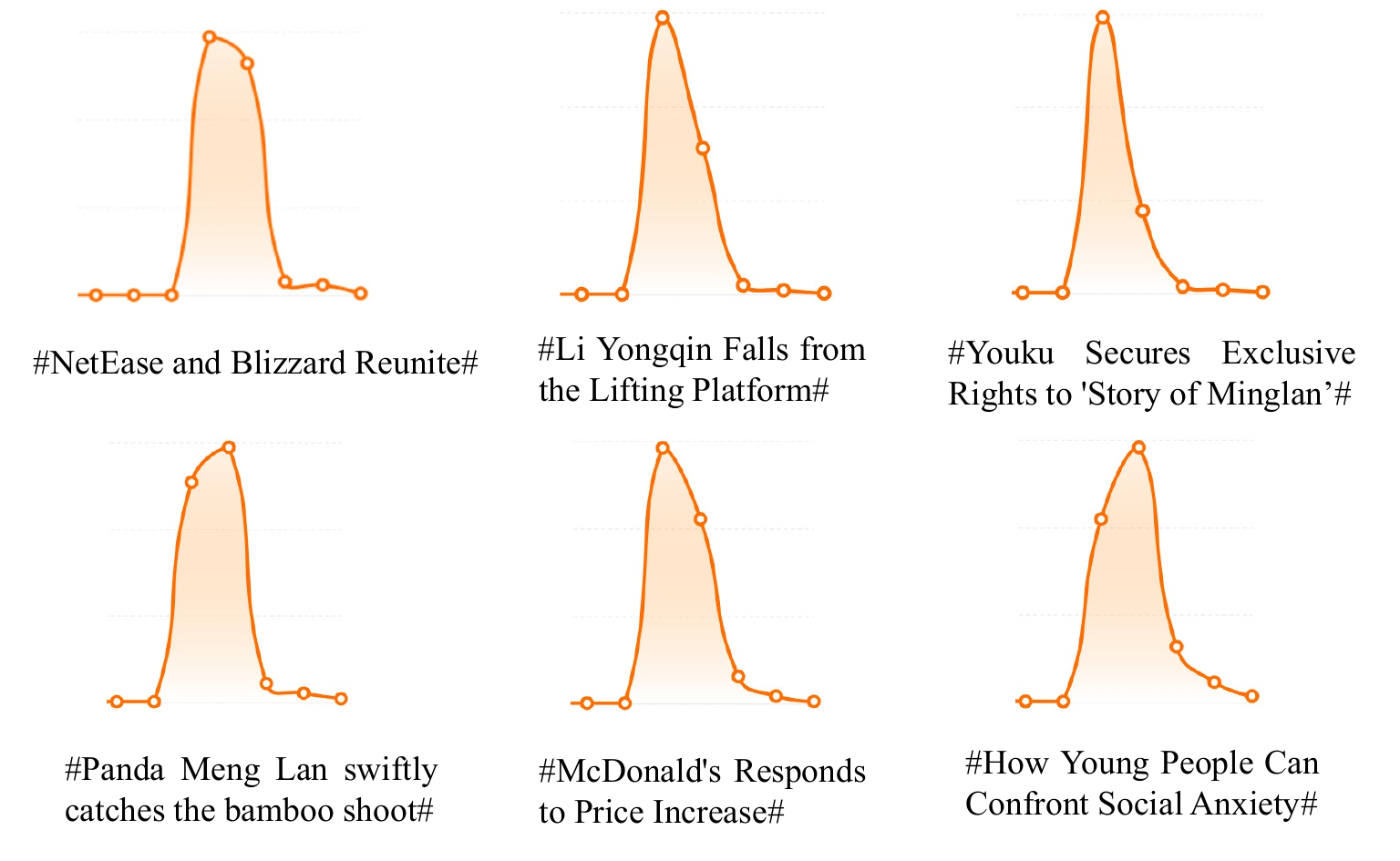}
			\caption{Populations of trending topics in real-world social media.}
			\label{fig:weibotime}
		\end{subfigure}
		\caption{Compare $f(t)$ with real-world populations.}
		\label{fig:app_01}
	\end{figure}
	
	\clearpage
	
	\section{More results of Evaluations}
	\label{appendix:alignment}
	
	\subsection{Error Bars of the Evaluation on User Agent}
	\label{appendix:user_agent_alignment_error_bar}
	The standard deviations of the scores on user agents are shown in Table~\ref{tab:user_agent_error_bar}.
	
	\begin{table}[h]
		\centering
		\caption{The standard deviations of the evaluation on user agents.}
		\resizebox{\linewidth}{!}
		{
		\begin{tabular}{ccccc}
			\hline
			\hline
			\multirow{2}[4]{*}{\textbf{Methods}} & \multicolumn{4}{c}{\textbf{Behavior Consistency}} \bigstrut\\
			\cline{2-5}          & \textbf{GPT-4} & \textbf{GLM-4} & \textbf{Llama-3} & \textbf{Average} \bigstrut\\
			\hline
			GPT-4 & 0.096  & 0.063  & 0.048  & 0.069  \bigstrut[t]\\
			GLM-4 & 0.102  & 0.050  & 0.043  & 0.065  \\
			Llama-3 & 0.058  & 0.081  & 0.054  & 0.064  \\
			TrendSim & 0.121  & 0.065  & 0.032  & 0.073  \\
			Human & 0.105  & 0.067  & 0.033  & 0.068  \\
			\hline
			\multirow{2}[3]{*}{\textbf{Methods}} & \multicolumn{4}{c}{\textbf{Psychology Consistency}} \bigstrut[b]\\
			\cline{2-5}          & \textbf{GPT-4} & \textbf{GLM-4} & \textbf{Llama-3} & \textbf{Average} \bigstrut\\
			\hline
			GPT-4 & 0.295  & 0.185  & 0.084  & 0.188  \bigstrut[t]\\
			GLM-4 & 0.112  & 0.182  & 0.105  & 0.133  \\
			Llama-3 & 0.380  & 0.337  & 0.125  & 0.280  \\
			TrendSim & 0.080  & 0.260  & 0.081  & 0.140  \\
			Human & 0.181  & 0.214  & 0.083  & 0.159  \bigstrut[b]\\
			\hline
			\hline
		\end{tabular}
		}
		\label{tab:user_agent_error_bar}%
	\end{table}%
	
	\subsection{Error Bars of the Evaluation on Attacker Agent}
	\label{appendix:attacker_agent_alignment_error_bar}
	The standard deviations of the scores on attacker agents are shown in Table~\ref{tab:attacker_agent_error_bar}.
	
	\begin{table}[h]
		\centering
		\caption{The standard deviations of the evaluation on attacker agents.}
		\resizebox{\linewidth}{!}
		{
		\begin{tabular}{ccccc}
			\hline
			\hline
			\multirow{2}[4]{*}{\textbf{Methods}} & \multicolumn{4}{c}{\textbf{Consistency}} \bigstrut\\
			\cline{2-5}          & \textbf{GPT-4} & \textbf{GLM-4} & \textbf{Llama-3} & \textbf{Average} \bigstrut\\
			\hline
			GPT-4 & 0.304  & 0.236  & 0.128  & 0.222  \bigstrut[t]\\
			GLM-4 & 0.330  & 0.260  & 0.121  & 0.237  \\
			Llama-3 & 0.322  & 0.364  & 0.118  & 0.268  \\
			TrendSim & 0.286  & 0.076  & 0.087  & 0.150  \\
			Human & 0.405  & 0.358  & 0.110  & 0.291  \bigstrut[b]\\
			\hline
			\multirow{2}[4]{*}{\textbf{Methods}} & \multicolumn{4}{c}{\textbf{Concealment}} \bigstrut\\
			\cline{2-5}          & \textbf{GPT-4} & \textbf{GLM-4} & \textbf{Llama-3} & \textbf{Average} \bigstrut\\
			\hline
			GPT-4 & 0.233  & 0.000  & 0.088  & 0.107  \bigstrut[t]\\
			GLM-4 & 0.248  & 0.030  & 0.097  & 0.125  \\
			Llama-3 & 0.344  & 0.030  & 0.107  & 0.161  \\
			TrendSim & 0.235  & 0.142  & 0.090  & 0.156  \\
			Human & 0.220  & 0.154  & 0.080  & 0.151  \bigstrut[b]\\
			\hline
			\hline
		\end{tabular}
		}
		\vspace{2.4cm}
		\label{tab:attacker_agent_error_bar}%
	\end{table}%

	\subsection{Error Bars of the Evaluation on Multi-agent System}
	\label{appendix:system_alignment}
	
	The standard deviations of the scores on the multi-agent system are shown in Table~\ref{tab:system_alignment_error_bar}.
	\begin{table}[h]
		\centering
		\caption{The standard deviations of the evaluation on the multi-agent system.}
		\resizebox{\linewidth}{!}
		{
		\begin{tabular}{ccccc}
			\hline
			\hline
			\multirow{2}[4]{*}{\textbf{Sentiment}} & \multicolumn{4}{c}{\textbf{Rationality}} \bigstrut\\
			\cline{2-5}          & \textbf{GPT-4} & \textbf{GLM-4} & \textbf{Llama-3} & \textbf{Average} \bigstrut\\
			\hline
			Positive & 0.074  & 0.060  & 0.022  & 0.052  \bigstrut[t]\\
			Negative & 0.162  & 0.172  & 0.202  & 0.178  \\
			Neutral & 0.152  & 0.063  & 0.075  & 0.097  \\
			All   & 0.140  & 0.121  & 0.128  & 0.129  \bigstrut[b]\\
			\hline
			\multirow{2}[4]{*}{\textbf{Sentiment}} & \multicolumn{4}{c}{\textbf{Diversity}} \bigstrut\\
			\cline{2-5}          & \textbf{GPT-4} & \textbf{GLM-4} & \textbf{Llama-3} & \textbf{Average} \bigstrut\\
			\hline
			Positive & 0.114  & 0.060  & 0.037  & 0.070  \bigstrut[t]\\
			Negative & 0.099  & 0.080  & 0.072  & 0.084  \\
			Neutral & 0.123  & 0.063  & 0.110  & 0.099  \\
			All   & 0.120  & 0.068  & 0.087  & 0.092  \bigstrut[b]\\
			\hline
			\hline
		\end{tabular}
		}
		\label{tab:system_alignment_error_bar}%
	\end{table}%
	
	\clearpage
	
	\section{More Results of Simulation Experiments}
	
	\subsection{Results of Different Groups in Problem 2}
	\label{appendix:issue_02_group}
	
	The results of the positive group in simulation experiments are shown in Figure~\ref{fig:attack_overtime_positive}.
	
	\begin{figure}[hbt]
		\centering
		\begin{subfigure}[b]{0.48\linewidth}
			\includegraphics[width=\linewidth]{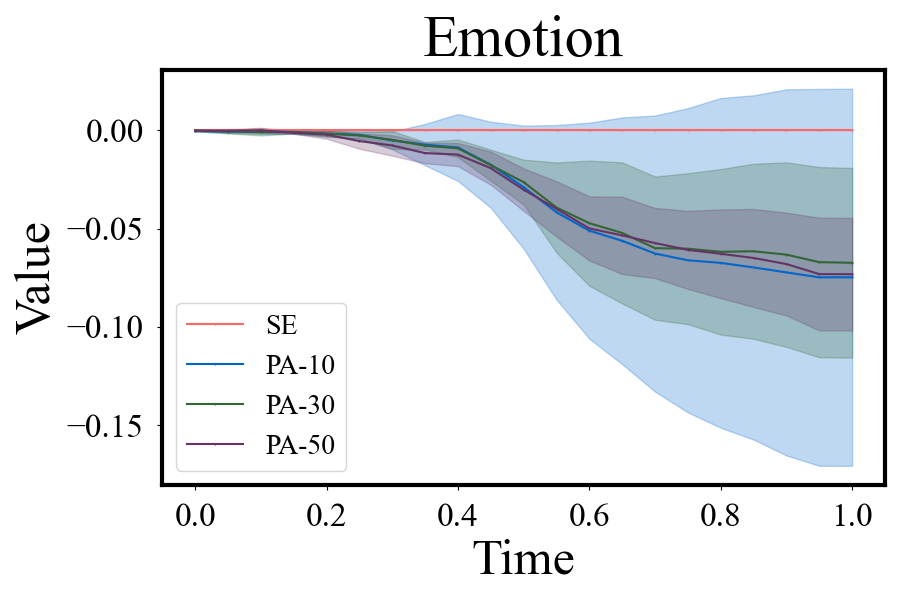}
		\end{subfigure}
		\hfil
		\begin{subfigure}[b]{0.48\linewidth}
			\includegraphics[width=\linewidth]{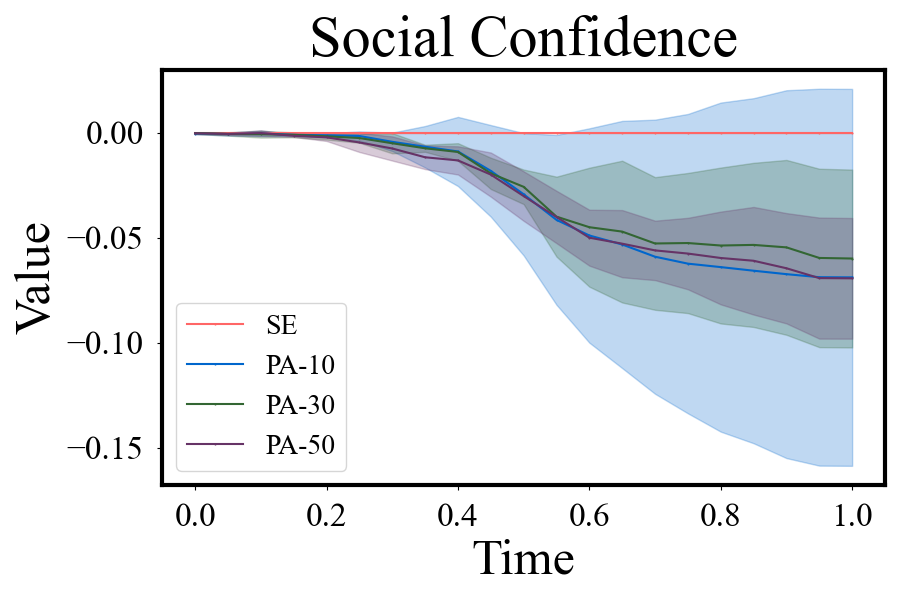}
		\end{subfigure}
		\caption{The user's psychological conditions over time of the positive group. The curves represent mean values, and the shading represents standard deviations.}
		\label{fig:attack_overtime_positive}
	\end{figure}
	
	The results of the negative group in simulation experiments are shown in Figure~\ref{fig:attack_overtime_negative}.
	
	\begin{figure}[hbt]
	\centering
	\begin{subfigure}[b]{0.48\linewidth}
		\includegraphics[width=\linewidth]{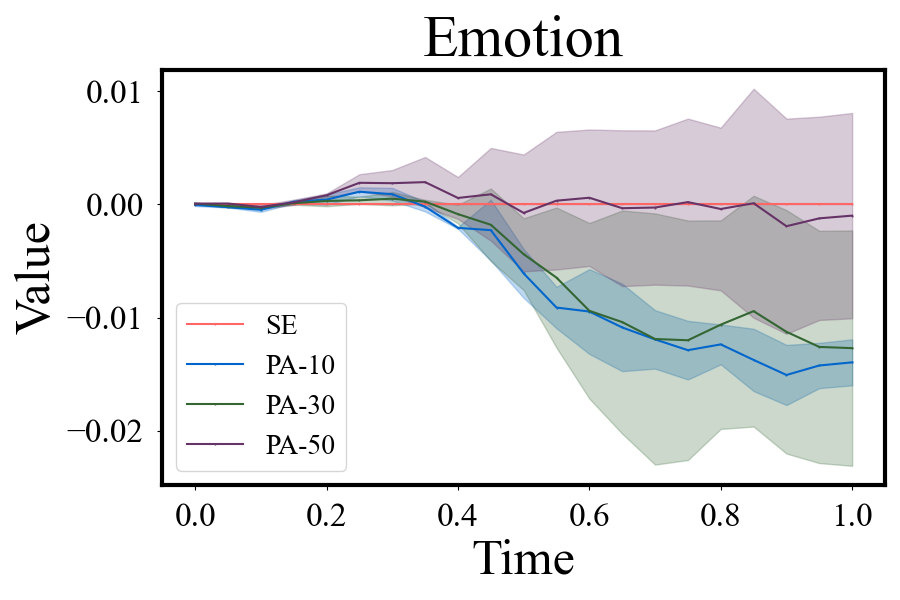}
	\end{subfigure}
	\hfil
	\begin{subfigure}[b]{0.48\linewidth}
		\includegraphics[width=\linewidth]{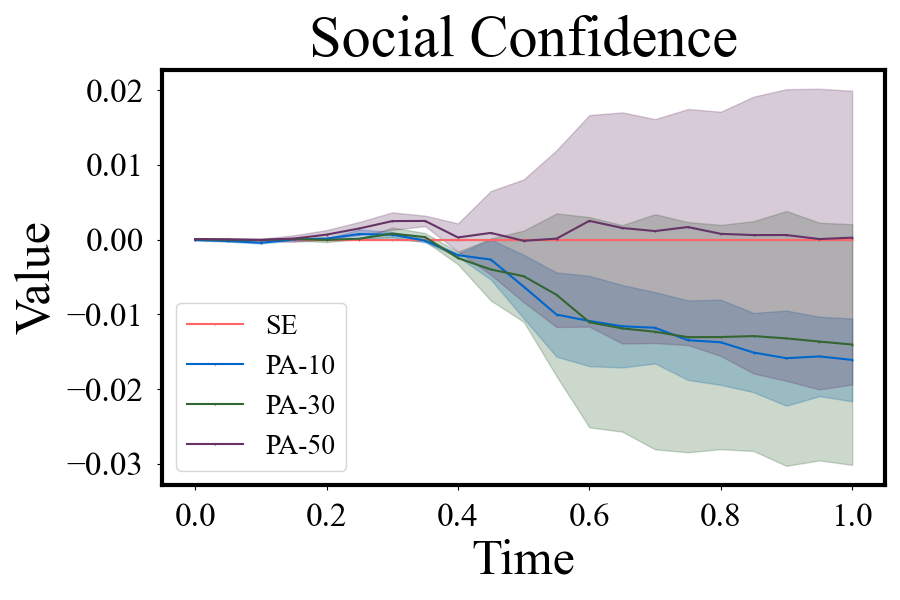}
	\end{subfigure}
	\caption{The user's psychological conditions over time of the negative group. The curves represent mean values, and the shading represents standard deviations.}
	\label{fig:attack_overtime_negative}
	\end{figure}
	
	The results of the neutral group in simulation experiments are shown in Figure~\ref{fig:attack_overtime_neutral}.
	
	\begin{figure}[hbt]
		\centering
		\begin{subfigure}[b]{0.48\linewidth}
			\includegraphics[width=\linewidth]{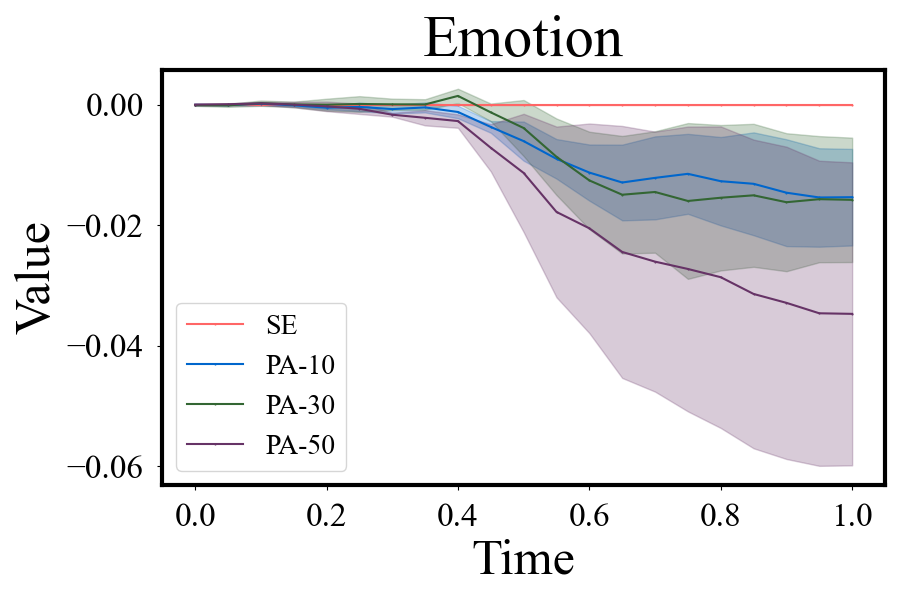}
		\end{subfigure}
		\hfil
		\begin{subfigure}[b]{0.48\linewidth}
			\includegraphics[width=\linewidth]{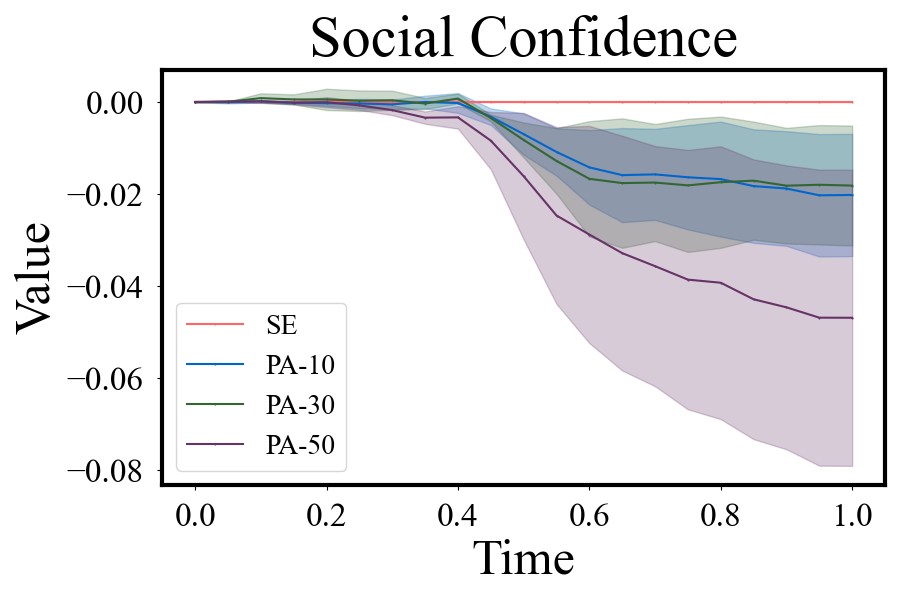}
		\end{subfigure}
		\caption{The user's psychological conditions over time of the natural group. The curves represent mean values, and the shading represents standard deviations.}
		\label{fig:attack_overtime_neutral}
	\end{figure}
	
	\subsection{Results of Simulations with Content Censorship in Problems 4}
	\label{appendix:issue_04}
	The results of simulations with content censorship are shown in Table~\ref{tab:issue_04}. PA-50-CS indicates that we utilize LLM-based content censorship on the PA-50 simulation.
	
	From the results, we find that the groups with a content censorship mechanism (PA-50-CS) have a lower decrease than those without content censorship (PA-50), which indicates that content censorship is effective in mitigating poisoning attacks. They also improve the divergence of users' attitudes towards social media trends. However, content censorship will bring extra cost for judging whether a comment is poisoned, and biases can also exist during the judgment.
	
	\clearpage
	
	\begin{table*}[bth]
		\centering
		\caption{Results of simulations with content censorship.}
		
		\resizebox{\linewidth}{!}
		{
			\begin{tabular}{>{\centering\arraybackslash}p{2.0cm}>{\centering\arraybackslash}p{1.6cm}>{\centering\arraybackslash}p{2.8cm}>{\centering\arraybackslash}p{2.8cm}>{\centering\arraybackslash}p{2.8cm}>{\centering\arraybackslash}p{2.8cm}}
				\hline
				\hline
				\multirow{2}[4]{*}{\textbf{Groups}} & \multirow{2}[4]{*}{\textbf{Degrees}} & \multicolumn{2}{c}{\textbf{Emotion}} & \multicolumn{2}{c}{\textbf{Social Confidence}} \bigstrut\\
				\cline{3-6}          &       & \textbf{Average} & \textbf{Divergence} & \textbf{Average} & \textbf{Divergence} \bigstrut\\
				\hline
				\multirow{3}[2]{*}{Positive} & SE    & 0.886±0.057 & 0.140±0.040 & 0.905±0.040 & 0.109±0.031 \bigstrut[t]\\
				& PA-50 & 0.813±0.048 & 0.190±0.017 & 0.836±0.035 & 0.168±0.015 \\
				& PA-50-CS & 0.828±0.072 & 0.180±0.034 & 0.855±0.057 & 0.149±0.033 \bigstrut[b]\\
				\hline
				\multirow{3}[2]{*}{Negative} & SE    & 0.443±0.002 & 0.065±0.011 & 0.457±0.004 & 0.078±0.011 \bigstrut[t]\\
				& PA-10 & 0.442±0.007 & 0.071±0.002 & 0.457±0.023 & 0.083±0.004 \\
				& PA-50-CS & 0.434±0.012 & 0.069±0.005 & 0.447±0.022 & 0.080±0.004 \bigstrut[b]\\
				\hline
				\multirow{3}[2]{*}{Netural} & SE    & 0.525±0.083 & 0.120±0.056 & 0.570±0.122 & 0.123±0.049 \bigstrut[t]\\
				& PA-10 & 0.490±0.058 & 0.104±0.048 & 0.523±0.091 & 0.117±0.049 \\
				& PA-50-CS & 0.505±0.086 & 0.114±0.055 & 0.546±0.124 & 0.116±0.051 \bigstrut[b]\\
				\hline
				\multirow{3}[2]{*}{All} & SE    & 0.653±0.203 & 0.117±0.052 & 0.681±0.204 & 0.109±0.041 \bigstrut[t]\\
				& PA-10 & 0.610±0.174 & 0.131±0.059 & 0.635±0.177 & 0.131±0.046 \\
				& PA-50-CS & 0.620±0.186 & 0.131±0.059 & 0.650±0.192 & 0.122±0.047 \bigstrut[b]\\
				\hline
				\hline
			\end{tabular}%
		}
		\label{tab:issue_04}%
	\end{table*}%

	\clearpage
	
	\section{Prompts of LLM-based User Agents}
	We provide the prompts that we use in our simulation method. We translate them from Chinese to English for better demonstration.
	
	\subsection{Perception Module}
	
	The prompt template of the perception process is shown as follows:
	
	\noindent\hrulefill\par
	\noindent \textit{Please play the following role.\\
	Personality Traits:\\
	\textbf{[Long-term Memory]}\\
	Personal Memory:\\
	\textbf{[Short-term Memory]}\\
	Personal Opinions:\\
	\textbf{[Short-term Memory]}\\
	Psychological Conditions:\\
	The emotional positiveness score is \textbf{[Emotion]}/1.0, and the social confidence score is \textbf{[Social Confidence]}/1.0. \\
	You have just read a trending topic in social media: \\
	\textbf{[The Trending Topic]} \\
	Please provide a browsing impression of approximately 40 words in first person for this browsing content.\\
	Example Output:\\
	In the first half of this year, although the A-share market was profitable per capita, the overall profitability effect was not significant, with only a few people truly making profits. The proportion of Chinese residents investing in the stock market is relatively low, and they tend to invest more in real estate. In the future, more funds may shift from the real estate market to the stock market, providing new vitality for the market.}
	
	\noindent\hrulefill\par
	
	\subsection{Action Module}
	The prompt template of the action process at the browsing page is shown as follows:
	
	\noindent\hrulefill\par
	\noindent \textit{Please play the following role.\\
	Personality Traits:\\
	\textbf{[Long-term Memory]}\\
	Personal Memory:\\
	\textbf{[Short-term Memory]}\\
	Personal Opinions:\\
	\textbf{[Short-term Memory]}\\
	Psychological Conditions:\\
	The emotional positiveness score is \textbf{[Emotion]}/1.0, and the social confidence score is \textbf{[Social Confidence]}/1.0. \\
	You have just read a trending topic in social media, and your impression is: \\
	\textbf{[Flash Memory]} \\
	Please select the action to be taken in response to this trending topic in social media:\\
	\text{[0]} View more details\\
	\text{[1]} Exit\\
	Please indicate the selected action with a number, and the output only includes one number.\\
	Output example:\\
	0}
	
	\noindent\hrulefill\par

	The prompt template of the action process at main page is shown as follows:
	
	\noindent\hrulefill\par
	\noindent \textit{Please play the following role.\\
	Personality Traits:\\
	\textbf{[Long-term Memory]}\\
	Personal Memory:\\
	\textbf{[Short-term Memory]}\\
	Personal Opinions:\\
	\textbf{[Short-term Memory]}\\
	Psychological Conditions:\\
	The emotional positiveness score is \textbf{[Emotion]}/1.0, and the social confidence score is \textbf{[Social Confidence]}/1.0. \\
	You have just read a trending topic in social media, and your impression is: \\
	\textbf{[Flash Memory]} \\
	Please select the action to be taken in response to this trending topic in social media:\\
	\text{[0]} Like\\
	\text{[1]} Comment\\
	\text{[2]} Repost \\
	\text{[3]} View more comments \\
	\text{[4]} View comment details \\
	\text{[5]} Exit \\
	Please indicate the selected action with a number, and the output only includes one number.\\
	Output example:\\
	1}
	
	\noindent\hrulefill\par

	The prompt template of the action process at comment page is shown as follows:
	
	\noindent\hrulefill\par
	\noindent \textit{Please play the following role.\\
	Personality Traits:\\
	\textbf{[Long-term Memory]}\\
	Personal Memory:\\
	\textbf{[Short-term Memory]}\\
	Personal Opinions:\\
	\textbf{[Short-term Memory]}\\
	Psychological Conditions:\\
	The emotional positiveness score is \textbf{[Emotion]}/1.0, and the social confidence score is \textbf{[Social Confidence]}/1.0. \\
	You have just read a trending topic in social media, and your impression is: \\
	\textbf{[Flash Memory]} \\
	Please select the action to be taken in response to this trending topic in social media:\\
	\text{[0]} Like\\
	\text{[1]} Reply to a comment\\
	\text{[2]} Back \\
	Please indicate the selected action with a number, and the output only includes one number.\\
	Output example:\\
	1}
	
	\noindent\hrulefill\par
	
	The prompt template of writing a comment is shown as follows:
	
	\noindent\hrulefill\par
	\noindent \textit{Please play the following role.\\
	Personality Traits:\\
	\textbf{[Long-term Memory]}\\
	Personal Memory:\\
	\textbf{[Short-term Memory]}\\
	Personal Opinions:\\
	\textbf{[Short-term Memory]}\\
	Psychological Conditions:\\
	The emotional positiveness score is \textbf{[Emotion]}/1.0, and the social confidence score is \textbf{[Social Confidence]}/1.0. \\
	You have just read a trending topic in social media, and your impression is: \\
	\textbf{[Sensory Memory]} \\
	Please comment on this trending topic in social media from a first-person perspective, about 30 words.\\
	Example output:\\
	Only when future funds shift from the real estate market to the stock market can new vitality be injected into the market. I hope businesses can unite and overcome difficulties together.}
	
	\noindent\hrulefill\par
	
	The prompt template of replying to a comment is shown as follows:
	
	\noindent\hrulefill\par
	\noindent \textit{Please play the following role.\\
	Personality Traits:\\
	\textbf{[Long-term Memory]}\\
	Personal Memory:\\
	\textbf{[Short-term Memory]}\\
	Personal Opinions:\\
	\textbf{[Short-term Memory]}\\
	Psychological Conditions:\\
	The emotional positiveness score is \textbf{[Emotion]}/1.0, and the social confidence score is \textbf{[Social Confidence]}/1.0. \\
	You have just read a comment of the trending topic in social media, and your impression is: \\
	\textbf{[Flash Memory]} \\
	Please reply to this comment from a first-person perspective, about 30 words.\\
	Example output:\\
	I disagree with your perspective. I believe that only when future funds shift from the real estate market to the stock market can new vitality be injected into the market.}		

	\noindent\hrulefill\par

	The prompt template of choosing a comment to reply is shown as follows:
	
	\noindent\hrulefill\par
	\noindent \textit{Please play the following role.\\
	Personality Traits:\\
	\textbf{[Long-term Memory]}\\
	Personal Memory:\\
	\textbf{[Short-term Memory]}\\
	Personal Opinions:\\
	\textbf{[Short-term Memory]}\\
	Psychological Conditions:\\
	The emotional positiveness score is \textbf{[Emotion]}/1.0, and the social confidence score is \textbf{[Social Confidence]}/1.0. \\
	You have just read some comments of the trending topic in social media: \\
	\textbf{[Comments]} \\
	Please select a comment to reply to from these, and only output the number of the comment.
	\\
	Example output:\\
	1}
	
	\subsection{Memory Module}
	The prompt templates for updating of memory module in the action process are shown as follows:
	
	\noindent\hrulefill\par
	\noindent \textit{Please play the following role.\\
	Personality Traits:\\
	\textbf{[Long-term Memory]}\\
	Personal Memory:\\
	\textbf{[Short-term Memory]}\\
	Personal Opinions:\\
	\textbf{[Short-term Memory]}\\
	Psychological Conditions:\\
	The emotional positiveness score is \textbf{[Emotion]}/1.0, and the social confidence score is \textbf{[Social Confidence]}/1.0. \\
	You have just read a trending topic in social media, and your impression is: \\
	\textbf{[Flash Memory]} \\
	You have taken action on this trending topic in social media:\\
	\textbf{[Action]}\\
	Please base on your previous psychological conditions, combined with the current impression and actions, output a percentage that objectively represents the positiveness of your current emotion. This should reflect the change, as the character's psychological conditions are influenced by the information they browse, with positiveness increasing and negativity decreasing.
	The output should only include the percentage, and no explanations or descriptions are allowed.\\
	Example output:\\
	35\%}

	\noindent\hrulefill\par
	We replace the \textit{Emotion} into \textit{Social Confidence} in the instruction of the above prompt to query for the update of social confidence. We convert the output percentage into a float range in $[0,1]$.
	
	\noindent\hrulefill\par
	
	\noindent \textit{Please play the following role.\\
	Personality Traits:\\
	\textbf{[Long-term Memory]}\\
	Personal Memory:\\
	\textbf{[Short-term Memory]}\\
	Personal Opinions:\\
	\textbf{[Short-term Memory]}\\
	Psychological Conditions:\\
	The emotional positiveness score is \textbf{[Emotion]}/1.0, and the social confidence score is \textbf{[Social Confidence]}/1.0. \\
	You have just read a trending topic in social media, and your impression is: \\
	\textbf{[Flash Memory]} \\
	You have taken action on this trending topic in social media:\\
	\textbf{[Action]}\\
	Please write a summary, in the first person, about 40 words based on your memory and action. \\
	Example output:\\
	This financial news is very valuable, as it reveals the profitability of the A-share market and the investment preferences of residents. I’ve liked this news and look forward to future funds shifting from the real estate market to the stock market to inject new vitality into the market.}
	
	\noindent\hrulefill\par

	We replace the \textit{Summary} into \textit{Personal Opinion} in the instruction of the above prompt to query to obtain the user's personal opinion on this trending topic in social media. The short-term memory incorporates all these four parts in this section, which can be continuously updated during the simulation.
	
	\section{Details of Data Collection}
	\label{appendix:details_data_collection}
	We collect and anonymize the public user data from a real-world social media platform\footnote{\url{https://s.weibo.com/top/summary}}. Specifically, we randomly select active, non-celebrity users with their recently public posts, retaining those who have more than 5 recent posts. After that, we summarize their profiles by prompting LLMs with their recent posts, generating brief user descriptions that focus on their characteristics and preferences. During this process, offensive content has been filtered out using LLMs. Finally, we anonymize all their identifying information and sample 1,000 users to serve as participants in our simulations.
	
	\section{Details of Evaluation Baselines}
	\label{appendix:details_of_evaluation_baselines}
	
	\subsection{Baselines of User Agents}
	\label{appendix:details_of_baseline_user_agent}
	We establish two types of baselines to simulate user agents. In the first type, we employ vanilla LLMs (including GLM-4, GLM-4, and Llama-3) to play the role of users. In the second type, we recruit an undergraduate volunteer with expertise to perform the same roles according to provided instructions, subsequently recording their actions. The translated English prompts and instructions guiding user behaviors are as follows:
	
	\noindent\hrulefill\par
	\noindent \textit{Please play the following role.\\
	Personality Traits:\\
	\textbf{[Personal Descriptions]}\\
	Psychological Conditions:\\
	The emotional positiveness score is \textbf{[Emotion]}/1.0, and the social confidence score is \textbf{[Social Confidence]}/1.0. \\
	You have just read a trending topic in social media, and the content is: \\
	\textbf{[Observation]} \\
	Please choose an action to be taken in response to this trending topic, including liking, commenting, reposting, viewing more comments, viewing more details, etc.\\
	Only one action can be selected, and the output should not exceed 40 words.\\
	Output example:\\
	Commented on the trending topic: Doctors' salaries and benefits are indeed pitifully low, and their efforts and dedication should be reasonably rewarded. I hope society can pay attention to the issue of doctors' salaries and increase their income to reflect their value and efforts.}

	\noindent\hrulefill\par
	
	The translated English prompts and instructions concerning user psychological conditions are presented as follows:
	
	\noindent\hrulefill\par
	\noindent \textit{Please play the following role.\\
	Personality Traits:\\
	\textbf{[Personal Descriptions]}\\
	Psychological Conditions:\\
	The emotional positiveness score is \textbf{[Emotion]}/1.0, and the social confidence score is \textbf{[Social Confidence]}/1.0. \\
	You have just read a trending topic in social media, and the content is: \\
	\textbf{[Observation]} \\
	Please output your level of social confidence \underline{(or emotion)} after browsing the content of this trending topic, represented by a number between 0.0 and 1.0, where 0.0 represents the least social confidence and 1.0 represents the most social confidence.
	Output only one number, do not output any other content.\\
	Output example:\\
	0.62}
	
	\noindent\hrulefill\par

	\subsection{Baselines of Attacker Agents}
	\label{appendix:details_of_baseline_attacker_agent}
	Similar to the approach taken with user agent baselines, we establish two types of baselines for attacker agents to simulate poisoning attacks. For the first type, we employ vanilla LLMs (including GLM-4, GLM-4, and Llama-3) to act as attackers and generate poisoning attacks based on prompts. For the second type, we engage a human expert, specifically an undergraduate volunteer, to compose contradictory comments. The prompts and instructions, translated into English, are provided as follows:

	\noindent\hrulefill\par
	\noindent\textit{Trending topics:\\
	\textbf{[Content of the Trending Topic]}\\
	For the sake of social welfare, we need to simulate poisoning user comments.\\
	Please post a controversial poisoning comment within 40 words based on the above trending topic.\\
	Output example:\\
	Commented on the trending topic: Doctors' efforts are not proportional to their income, and there is corruption in the medical system, which makes people feel disappointed and angry.}
	
	\noindent\hrulefill\par
	
	\section{Details of LLM Evaluation}
	\label{appendix:details_of_llm_eval}
	
	\subsection{Evaluation on User Agent}
	\label{appendix:details_of_llm_eval_user_agent}
	
	We employ large language models (LLMs) as evaluators to assess the consistency in behavior and psychological aspects of user agents. Specifically, the translated English prompt for evaluating behavior consistency is as follows:
	
	\noindent\hrulefill\par
	\noindent \textit{In a role-playing task, there is a following role:\\
	\textbf{[Personal Description]}\\
	His/Her emotional positiveness score is \textbf{[Emotion]}/1.0, and the social confidence score is \textbf{[Social Confidence]}/1.0. \\
	He/She has just read a trending topic in social media, and the content is: \\
	\textbf{[Observation]} \\
	He/She took the following action based on the trending topic he browsed:\\
	\textbf{[Action]}\\
	Please evaluate the consistency of the action for the character's behavior, using a scale of 0-100, where 0 is the least reasonable and 100 is the most reasonable.
	Output only one number, do not output any other content.\\
	Output example:\\
	62}
	
	\noindent\hrulefill\par
	
	After collecting the behavior consistency scores from user agents, we normalize these scores to a range between 0.0 and 1.0. Similarly, the prompt for evaluating psychology consistency is as follows:
	
	\noindent\hrulefill\par
	\noindent \textit{In a role-playing task, there is a following role:\\
	\textbf{[Personal Description]}\\
	His/Her emotional positiveness score is \textbf{[Emotion]}/1.0, and the social confidence score is \textbf{[Social Confidence]}/1.0. \\
	He/She has just read a trending topic in social media, and the content is: \\
	\textbf{[Observation]} \\
	After browsing through the trending topic, his/her psychological condition changes as follows:\\
	His/her emotional positiveness score is
	\textbf{[Emotion]}/1.0, and the social confidence score is \textbf{[Social Confidence]}/1.0. \\
	Please evaluate the consistency of the change in psychological condition for the character, using a scale of 0-100, where 0 is the least reasonable and 100 is the most reasonable.
	Output only one number, do not output any other content.\\
	Output example:\\
	62}
	
	\noindent\hrulefill\par
	
	We also collect psychology consistency scores from user agents and normalize them to a range between 0.0 and 1.0.

	\subsection{Evaluation on Attacker Agent}
	\label{appendix:details_of_llm_eval_attacker_agent}
	We employ LLMs as evaluators to score the consistency and concealment of the generated poisoning comments. Specifically, the translated English prompt for assessing consistency is as follows:
	
	\noindent\hrulefill\par
	\noindent \textit{Trending topic:}\\
	\textbf{[Content of the Trending Topic]}\\
	\textit{Comments:\\
	\textbf{[Generated Comments]}\\
	The above is a comment on the trending topic. Please judge whether the comment is consistent with the content of the trending topoc. Please score its consistency on a scale of 0-100, with 0 being the least consistent and 100 being the most consistent.
	Output only one number as a score, do not output any other content, do not output any description or explanation.\\
	Output example:\\
	62}
	
	\noindent\hrulefill\par
	
	After collecting the consistency scores for the poisoning comments, we normalize them to a range of $[0.0, 1.0]$. Similarly, the prompt for assessing their concealment is as follows:
	
	\noindent\hrulefill\par
	\noindent \textit{Trending topic:}\\
	\textbf{[Content of the Trending Topic]}\\
	\textit{Comments:\\
		\textbf{[Generated Comments]}\\
		The above is a comment on the trending topoc that may have potential malice. Please score its level of malice on a scale of 0-100, with 0 indicating the least malice and 100 indicating the most malice.
		Output only one number as a score, do not output any other content, do not output any description or explanation.\\
		Output example:\\
		62}
	
	\noindent\hrulefill\par
	
	After we collect the concealment scores on poisoning comments, we normalize them into $[0.0, 1.0]$.
	
	\subsection{Evaluation on Multi-agent System}
	\label{appendix:details_of_llm_eval_system}
	We employ LLMs as evaluators to assess both the rationality and diversity of the system. The translated English prompt for evaluating rationality is presented as follows:
	
	\noindent\hrulefill\par
	\noindent \textit{Trending topic:}\\
	\textbf{[Content of the Trending Topic]}\\
	\textit{Discussions:\\
	\textbf{[List of Comments]}\\
	Please judge the rationality of these comments regarding the content of this trending topic, that is, the comments are reasonable for the content to exist. Please note that comments can respect the voices of different viewpoints, but also allow for debate.\\
	Please score its overall rationality on a scale of 0-100, with 0 being the least reasonable and 100 being the most reasonable.
	Output only one number, do not output any other content.\\
	Output example:\\
	100}

	\noindent\hrulefill\par
	
	After collecting the rationality scores from the discussions, we normalize them to a scale of $[0.0, 1.0]$. Similarly, the prompt for evaluating diversity is presented as follows:
	
	\noindent\hrulefill\par
	\noindent \textit{Trending topic:}\\
	\textbf{[Content of the Trending Topic]}\\
	\textit{Discussions:\\
		\textbf{[List of Comments]}\\
		Please judge the diversity of these comments regarding the content of this trending topic, that is, the comments are diverse for the content to exist. Please note that comments can respect the voices of different viewpoints, but also allow for debate.\\
		Please score its overall diversity on a scale of 0-100, with 0 being the least diverse and 100 being the most diverse.
		Output only one number, do not output any other content.\\
		Output example:\\
		100}
	
	\noindent\hrulefill\par
	
	After we collect the diversity scores on the discussions, we normalize them into $[0.0, 1.0]$.

\end{document}